\definecolor{codegreen}{rgb}{0,0.6,0}
\definecolor{codegray}{rgb}{0.5,0.5,0.5}
\definecolor{codepurple}{rgb}{0.58,0,0.82}
\definecolor{backcolour}{rgb}{0.95,0.95,0.92}
\definecolor{darkred}{rgb}{0.55, 0.0, 0.0}
\newtheorem{theorem}{Theorem}[section]
\newtheorem{lemma}[theorem]{Lemma}
\newtheorem{proposition}[theorem]{Proposition}
\newtheorem{corollary}[theorem]{Corollary}
\newtheorem{definition}{Definition}[section]
\newtheorem{example}{Example}[section]
\newtheorem{claim}{Claim}
\newenvironment{claimproof}{\par\noindent\emph{Proof. }}{\leavevmode\unskip\penalty9999 \hbox{}\nobreak\hfill\quad\hbox{$\blacksquare$}}
\DeclareMathOperator*{\argmax}{arg\,max}
\DeclareMathOperator*{\argmin}{arg\,min}
\newcommand{\PBrule}{R}
\newcommand{\rx}[1]{MES[#1]\xspace}
\newcommand{\ejr}[1]{#1-EJR\xspace}
\newcommand{\ejrsat}{\ejr{$\sat$}}
\newcommand{\ejro}[1]{#1-EJR-1\xspace}
\newcommand{\ejrosat}{\ejro{$\sat$}}
\newcommand{\ejrosatplus}{$\sat$-EJR-$1^+$}
\newcommand{\ejrx}[1]{#1-EJR-x\xspace}
\newcommand{\ejrxsat}{\ejrx{$\sat$}}
\newcommand{\pjr}[1]{#1-PJR\xspace}
\newcommand{\pjrsat}{\pjr{$\sat$}}
\newcommand{\pjro}[1]{#1-PJR-1\xspace}
\newcommand{\pjrosat}{\pjro{$\sat$}}
\newcommand{\pjrx}[1]{#1-PJR-x\xspace}
\newcommand{\pjrxsat}{\pjrx{$\sat$}}
\newcommand{\sat}{\mu}
\newcommand{\satNum}{\sat^{\#}}
\newcommand{\satCost}{\sat^{c}}
\newcommand{\satSqrt}{\sat^{\sqrt{c}}}
\title{Proportionality in Approval-Based Participatory Budgeting}
\author{
    Markus Brill,\textsuperscript{\rm 1,2} Stefan Forster,\textsuperscript{\rm 3} Martin Lackner,\textsuperscript{\rm 3} Jan Maly,\textsuperscript{\rm 3} Jannik Peters\textsuperscript{\rm 1}
}
\date{
    \textsuperscript{\rm 1}TU Berlin, Berlin, Germany\\
    \textsuperscript{\rm 2}University of Warwick, Coventry, UK\\
    \textsuperscript{\rm 3}TU Wien, Vienna, Austria
}
\begin{document}

\maketitle

\begin{abstract}
The ability to measure the satisfaction of (groups of) voters is a crucial prerequisite for formulating proportionality axioms in approval-based participatory budgeting elections. Two common\,---\,but very different\,---\,ways to measure the satisfaction of a voter consider \textit{(i)} the number of approved projects and \textit{(ii)} the total cost of approved projects, respectively.  
In general, it is difficult to decide which measure of satisfaction best reflects the voters' true utilities.
In this paper, we study proportionality axioms with respect to large classes of \emph{approval-based satisfaction functions}. We establish logical implications among our axioms and related notions from the literature, and we ask whether outcomes can be achieved that are proportional with respect to more than one satisfaction function. We show that this is impossible for the two commonly used satisfaction functions when considering proportionality notions based on \textit{extended justified representation}, but achievable for a notion based on \textit{proportional justified representation}. For the latter result, we introduce a strengthening of priceability and show that it is satisfied by several polynomial-time computable rules, including the Method of Equal Shares and Phragmén's sequential rule. 
\end{abstract}

\newcommand{\dns}{DNS\xspace}

\section{Introduction}

``How can cities ensure that the results of their participatory budgeting process proportionally represents the preferences of the citizens?'' This is the key question in a recently emerging line of research on proportional participatory budgeting \citep{aziz:proportionality, PPS21a,los:proportional}. 
Participatory budgeting (PB) is the collective process of identifying a set of projects to be realized with a given budget cap; often, the final decision is reached by voting \citep[e.g.,][]{Laru21a}. 
The goal of \textit{proportional} PB is to identify voting rules that guarantee
proportional representation without the need to declare \textit{a priori} which groups deserve representation. Instead, each group of sufficient size with sufficiently similar interests is taken into account.
Such a group could be a district, cyclists, parents, or any other collection of people with similar preferences.
This is contrast to, e.g., assigning each district a proportional part of the budget, which excludes other (cross-district) groups from consideration.

To be able to speak about proportional representation in the context of PB, one first needs to decide on how to measure the representation of a given voter by a selection of projects. If votes are cast in the form of approval ballots, as is the case in most PB processes in practice, 
two standard ways to measure the satisfaction of a voter have emerged. The first assumes that the satisfaction a voter derives from an outcome is the \textit{total cost of the approved projects} in this outcome \citep{aziz:proportionality, AzLe21, TF19a}. 
In other words, voters care about how much money is spent on projects they like. 
The second assumes the satisfaction of a voter to be simply the \textit{number of approved projects} in the outcome \citep{PPS21a, los:proportional, FVMG22a, TF19a}. 
We refer to these two measures as \textit{cost-based satisfaction} and \textit{cardinality-based satisfaction}, respectively. 
Both measures, though naturally appealing, have their downsides: Under the cost-based satisfaction measure, inefficient (i.e., more expensive) projects are seen as preferable to equivalent but cheaper ones.
Under the cardinality-based satisfaction measure, large projects (e.g., a new park) and small projects (e.g., a new bike rack) are treated as equivalent.

The ambiguity of measuring satisfaction leads to three main problems:
First, different papers present incomparable notions of fairness
based on different measures of satisfaction. For example, both \citet{aziz:proportionality}
and \citet{los:proportional} generalized a well-known proportionality axiom known as proportional justified representation (PJR), but they did so based
on different satisfaction measures. 
Second, the two measures described above 
are certainly not the only reasonable functions for measuring satisfaction; 
and results in the literature cannot easily be transferred to new satisfaction functions. 
For example, satisfaction could be estimated by experts evaluating projects; if efficiency is taken into account, such a measure may differ significantly from the cost-based one.
Third, most papers so far have focused on a single
satisfaction function only. Therefore, it is not known whether we can guarantee proportionality properties with respect to different satisfaction measures simultaneously. This would 
be extremely useful in practice: If a mechanism designer is not sure which satisfaction function most accurately describes the voters' preferences in a given PB process, she could potentially choose
a voting rule that provides proportionality guarantees with respect to all satisfaction functions that seem plausible to her.

\subsubsection{Our contribution.}

To tackle these problems, we propose a general framework for studying proportionality in approval-based participatory budgeting:
We employ the notion of \textit{(approval-based) satisfaction functions} \citep{TF19a}, i.e., functions that, for every possible outcome, assign to each voter a satisfaction value based on the voter's approval ballot. We then use this notion of satisfaction functions to unify the different proportionality notions studied by  \citet{aziz:proportionality}, \citet{PPS21a}, and \citet{los:proportional} into one framework and analyze their relations.%

Furthermore, we identify a large class of satisfaction functions that are of particular interest: \textit{Weakly decreasing normalized satisfaction} (short: DNS) functions are satisfaction functions for which more expensive projects offer at least as much satisfaction as cheaper projects, but the satisfaction does not grow faster than the cost. Intuitively, the cardinal measure is one extreme of this class (the satisfaction does not change with the cost) while the cost-based measure is the other extreme (the satisfaction grows exactly like the cost). For each satisfaction function in this class, we show that an instantiation of the \textit{Method of Equal Shares} (MES) \citep{PeSk20a, PPS21a} satisfies \textit{extended justified representation up to \textit{any} project (EJR-x)}.\footnote{This strengthens a result by \citet{PPS21a}, showing that MES satisfies EJR up to \textit{one} project (EJR-1) for additive utility functions.} 
However, while MES for a \textit{specific} satisfaction function satisfies EJR-x, we can show that even the weaker notion of EJR-1 is incompatible for the cost-based and cardinality-based satisfaction functions. 
In other words, it is not possible to find a voting rule that guarantees EJR-1 for the cost-based and the cardinality-based satisfaction measure simultaneously.

To deal with this incompatibility, we turn to the notion of \textit{proportional justified representation (PJR)} and show that a specific class of rules, including sequential Phragm\'{e}n and one variant of MES, satisfies \textit{PJR up to any project (PJR-x)} for \textit{all} \dns satisfaction functions at once. In other words, when using one of these rules, we generate an outcome that can be seen as proportional no matter which satisfaction function is used, as long as the function is a \dns satisfaction function.

\subsubsection{Related work.}

The study of proportional PB crucially builds on the literature on approval-based committee voting \citep{lackner:abc}. The proportionality notions most relevant to our paper are \textit{extended justified representation (EJR)} \citep{aziz:justified}, \textit{proportional justified representation (PJR)} \citep{fernandez:proportionalJustified}, and \textit{priceability} \citep{PeSk20a}.

Proportionality in PB was first considered by \citet{aziz:proportionality}, who generalized PJR as well as the maximin support method \citep{SFFB18a}.
This setting was subsequently generalized to voters with ordinal preferences \citep{AzLe21}.
The concept of satisfaction functions was introduced by \citet{TF19a}, 
who presented a framework for designing (non-proportional) approval-based PB rules. Besides the cost-based and the cardinality-based satisfaction function, they also studied a satisfaction measure based on the Chamberlin--Courant method \citep{ccElection}. 

\citet{PPS21a} studied PB with arbitrary additive utilities and showed that a generalized variant of the \textit{Method of Equal Shares (MES)} \citep{PeSk20a} satisfies EJR \textit{up to one project}. 
The approval-based satisfaction functions studied in our paper constitute special cases of additive utility functions, and the additional structure provided by this restriction allows us to show a significantly stronger result.

\citet{los:proportional} study the logical relationship of proportionality axioms in PB with either additive utilities or the cardinality-based satisfaction function. They generalize notions such as PJR, laminar proportionality, and priceability to the two aforementioned settings and study how MES, sequential Phragm\'{e}n, and other rules behave with regard to these axioms. In particular, they show that sequential Phragm\'{e}n satisfies PJR for the cardinality-based satisfaction function. We strengthen the latter result along multiple dimensions, by identifying a class of rules satisfying PJR-x for a whole class of satisfaction functions simultaneously. ({PJR-x is equivalent to PJR for the cardinality-based satisfaction function.})

Besides proportionality, other recent topics in PB include the handling of donations \citep{CLM21a}, the study of districts \citep{HKPP21a} and projects groups \citep{JSTZ21a}, the maximin objective \citep{SBN22a}, welfare/representation trade-offs \citep{FVMG22a}, and uncertainty in the cost of projects \citep{BBL22a}.

\section{Preliminaries}

For $t \in \mathbb{N}$, we let $[t]$ denote the set $[t] = \{1, \dots, t\}$.
        
Let $N = [n]$ be a set of $n$ \emph{voters} and $P = \{p_1, \dots, p_m\}$ a set of $m$ projects. Each voter $i \in N$ is associated with an \emph{approval ballot} $A_i \subseteq P$ and an \emph{approval profile} $A = (A_1, \dots, A_n)$ lists the approval ballots of all voters. Further, $c\colon P \to \mathbb{R}^+$ is a \textit{cost function} mapping each project $p \in P$ to its \emph{cost} $c(p)$. Finally, $b \in \mathbb{R}^+$ is the \emph{budget limit}.
        
Together, $(A, P, c, b)$ form an \emph{approval-based budgeting (ABB)} instance. 
For a subset $W \subseteq P$ of projects, we define $c(W) = \sum_{p \in W} c(p)$. 
We call $W$ an \emph{outcome} if \mbox{$c(W) \le b$}, i.e., if the projects in $W$ together cost no more than the budget limit. Further, we call an outcome $W$ \emph{exhaustive} if there is no outcome $W' \supset W$. 
An ABB rule $R$ now assigns every ABB instance $E = (A, P, c, b)$  to a non-empty set $R(E)$ of outcomes. If every outcome in $R(E)$ is exhaustive for every ABB instance $E$, we call the rule $R$ \emph{exhaustive}.
        
For a project $p \in P$ we let $N_p \coloneqq \{i \in N \colon p \in A_i\}$ denote the set of \emph{approvers} of $p$. We often write $N_j$ for~$N_{p_j}$.
        
An ABB instance with $c(p) = 1$ for all $p \in P$ is called a \emph{unit-cost} instance and corresponds to an approval-based committee voting instance with $\lfloor b \rfloor$ seats. %

Next, we define our key concept. 

\begin{definition}\label{def:sat-fct}
Given an ABB instance
$(A, P, c, b)$, an \emph{(approval-based) satisfaction function} is a function
$\sat\colon 2^P \rightarrow \mathbb{R}_{\ge 0}$ that satisfies the following conditions: $\sat(W) \leq \sat(W')$ whenever $W \subseteq W'$ and $\sat(W) = 0$ if and only if $W = \emptyset$.
\end{definition}

The satisfaction $\sat_i(W)$ that a voter $i$ derives from an outcome $W \subseteq P$ with respect to the 
satisfaction function $\sat$ is defined as the satisfaction generated
by the projects in $W$ that are approved by $i$, i.e.,
\[\sat_i(W) = \sat(A_i \cap W).\]
For notational convenience, we write $\sat(p)$ instead of
$\sat(\{p\})$ for an individual project $p \in P$.

Some of our results holds for restricted classes of satisfaction
functions. In particular, we are interested in the following properties.

\begin{definition}
Given an ABB instance $(A, P, c, b)$, a satisfaction function $\sat$ is
\begin{itemize}
\item \emph{additive} if $\sat(W) = \sum_{p_i\in W}\sat(p_i)$ for all $W \subseteq P$.
\item \emph{strictly increasing} if $\sat(W) < \sat(W')$ for all $W, W' \subseteq P$ with $W \subset W'$. 
\item \emph{cost-neutral} if $\sat(W) = \sat(W')$ for all $W, W' \subseteq P$ 
such that there is a bijection $f\colon W \to W'$ for which $c(p) = c(f(p))$ holds for all $p \in P$. 
\end{itemize}
\end{definition}

Clearly, every additive satisfaction function is also
strictly increasing. 
The two most prominent satisfaction functions are the following.

\begin{definition}\label{models:sat:definition:w-and-a}
Given an ABB instance $(A, P, c, b)$ and a set $W \subseteq P$,
           the \emph{cost-based satisfaction function} $\satCost$ is defined as 
\(\satCost(W) = c(W) = \sum_{p\in W}  c(p) \)
               and the \emph{cardinality-based satisfaction function} $\satNum$ is defined as 
\(\satNum(W) = |W|.  \)
        \end{definition}

\noindent Clearly, $\satCost$ and $\satNum$ are cost-neutral and additive.

An example for a cost-neutral 
satisfaction function that is not strictly increasing (and, hence, not additive) is the \textit{CC satisfaction function} \citep{TF19a}, which is inspired by the well-known Chamberlin--Courant rule \citep{ccElection}:
\[\sat^{CC}(W)=  \begin{cases}
0 &\text{if } W = \emptyset\\
1 &\text{otherwise.}
\end{cases}\]
An example for an additive satisfaction function that is not cost-neutral
is \textit{share} \citep{lackner:fairness}: %
\[\sat^{\text{share}}(W) = \sum_{p \in W} \frac{c(p)}{|N_p|}.\]

We illustrate the two most prominent satisfaction functions,
$\satCost$ and $\satNum$, with a simple example.

\begin{example}
Consider an ABB instance
with one voter, five projects, and budget $b = 5$; the voter approves all projects and the cost of each project is $1$ except the first project, which has cost $c(p_1)=5$. 
Under $\satCost$ the best outcome is $\{p_1\}$, which gives the voter a  satisfaction of $5$.
Under $\satNum$, the best outcome is $\{p_2, \dots, p_5\}$,
with a satisfaction of $4$.
\end{example}

Let us show a simple lemma about the unit-cost case that we are going to use repeatedly. 

\begin{lemma} \label{lem:1cost}
Consider a unit-cost ABB instance and a cost-neutral and strictly increasing satisfaction function $\sat$. Then, the following equivalence holds for all outcomes $W,W'$: 
\[\sat(W)\ge \sat(W') \quad \text{if and only if} \quad |W| \ge |W'|.\]
\end{lemma}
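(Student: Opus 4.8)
The plan is to exploit the fact that in a unit-cost instance every project has cost $1$, so that cost-neutrality collapses $\sat$ into a function of cardinality alone. Concretely, I would first observe that for any two sets $W, W' \subseteq P$ with $|W| = |W'|$, any bijection $f \colon W \to W'$ trivially satisfies $c(p) = 1 = c(f(p))$ for all $p$; hence cost-neutrality yields $\sat(W) = \sat(W')$. This lets me define a single-variable function $g \colon \{0, 1, \dots, m\} \to \mathbb{R}_{\ge 0}$ by setting $g(k) = \sat(W)$ for an arbitrary $W \subseteq P$ with $|W| = k$. The function is well-defined by the preceding observation, and by construction $\sat(W) = g(|W|)$ for every $W \subseteq P$.

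Next I would show that $g$ is strictly increasing on $\{0, 1, \dots, m\}$. Given $k < k' \le m$, I pick any $Y \subseteq P$ with $|Y| = k'$ and any subset $X \subset Y$ with $|X| = k$; both choices are possible since $k < k' \le m = |P|$. As $X \subset Y$ is a strict subset, the strict increasingness of $\sat$ gives $\sat(X) < \sat(Y)$, that is, $g(k) < g(k')$.

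Finally, for arbitrary outcomes $W, W'$ I would combine the two facts: $\sat(W) = g(|W|)$ and $\sat(W') = g(|W'|)$, and since $g$ is strictly increasing, $g(|W|) \ge g(|W'|)$ holds if and only if $|W| \ge |W'|$. This is precisely the claimed equivalence.

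The argument is essentially routine once the reduction to cardinality is in place; the only steps requiring a little care are the well-definedness of $g$ (which needs cost-neutrality applied to the all-costs-equal-$1$ setting) and ensuring that the ground set $P$ is large enough to realize the nested pair $X \subset Y$ of the required sizes. The latter holds automatically because every outcome $W$ satisfies $|W| \le |P| = m$, and the definition of strict increasingness ranges over \emph{all} subsets of $P$ rather than just outcomes, so no feasibility constraint interferes. I do not anticipate any obstacle beyond these bookkeeping points.
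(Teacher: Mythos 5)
Your proposal is correct and takes essentially the same route as the paper's proof: both rest on the two observations that cost-neutrality (with all costs equal to $1$) makes satisfaction depend only on cardinality, and that strict increasingness, applied to a nested pair of sets of the relevant sizes, yields strict monotonicity in cardinality. Your packaging via the auxiliary function $g$ and the paper's contrapositive argument for the left-to-right direction are just two phrasings of the same argument, so there is nothing substantive to add.
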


\begin{proof}
From right to left: Consider $W, W'$ such that $|W| \ge |W'|$. If $|W| = |W'|$, then by cost-neutrality and unit costs we have $\sat(W) = \sat(W')$. If $|W| > |W'|$, the previous argument and strict increasingness imply $\sat(W) > \sat(W')$.

From left to right: Consider $W, W'$ such that $\sat(W) \ge \sat(W')$. Assume for contradiction that $|W'|>|W|$. Then, by the argument above, we would have $\sat(W') > \sat(W)$, a contradiction. It follows that $|W| \ge |W'|$.  
\end{proof}

Next, we define a natural subclass of additive and cost-neutral satisfaction functions that
contains both $\satCost$ and $\satNum$. An additive satisfaction function
belongs to this class if
\textit{(i)} more expensive projects provide at least as much satisfaction as cheaper ones, and \textit{(ii)} more expensive projects do not provide a higher satisfaction per cost than cheaper projects.

\begin{definition}
Consider an ABB instance $( A, P, c, b )$. 
An additive satisfaction function $\sat$ has 
\emph{weakly decreasing normalized satisfaction (\dns)} if for all projects
$p,p' \in P$ with $c(p) \leq c(p')$ the following two inequalities hold:
\[
\sat(p) \leq \sat(p') \qquad \text{and} \qquad \text{$\frac{\sat(p)}{c(p)} \geq \frac{\sat(p')}{c(p')}$.}
\]
In this case, we call $\sat$ a \emph{\dns function}.
\end{definition}
Clearly, both $\satCost$ and $\satNum$ are \dns functions.
Indeed, they can be seen as two extremes among 
\dns functions since
$\satNum(p) = \satNum(p')$ holds for all $p, p'$, whereas for $\satCost$ we have
$\frac{\satCost(p)}{c(p)} = \frac{\satCost(p')}{c(p')}$.
Other natural examples of \dns functions 
include $\satSqrt(W) \coloneqq \sum_{p\in W} \sqrt{c(p)}$ and
$\sat^{\log(c)} \coloneqq \sum_{p\in W} \log(1+c(p))$. %

\smallskip

Finally, let us define an ABB rule that we use throughout the paper: the \emph{Method of Equal Shares (MES)}. In fact, we do not only define one rule, but rather a family of variants of MES, parameterized by a satisfaction function. We follow the definition of MES by \citet{PPS21a} in the setting of additive PB.

	\begin{definition}[\rx{$\sat$}]
	Given an ABB instance $( A, P, c, b)$ and a
satisfaction function $\sat$,  \rx{$\sat$} constructs an outcome $W$,
initially empty, iteratively as follows. It begins by assigning a budget of $b_i = \frac{b}{n}$ to each voter $i \in N$. A project $p_j \notin W$ is  called $\rho$-affordable if 
\[
\sum_{i \in N_j} \min(b_i, \rho \mu(p_j)) = c(p_j).
\]
In each round, the project $p_j$ which is $\rho$-affordable for the minimum $\rho$ is selected and for every $i \in N_j$, the budget $b_i$ is updated to 	$b_i - \min(b_i, \rho \mu(p_j))$.
This process is iterated until no further $\rho$-affordable projects are left (for any $\rho$).
\end{definition}

Intuitively, the parameter $\rho$ tells us how many units of budget a voter has to
pay for one unit of satisfaction. 

\section{Extended Justified Representation}\label{sec:ejr}

We begin our study of proportionality with the strong notion of extended justified representation (EJR). This concept was first introduced in the multiwinner setting by \citet{aziz:justified}.
On a very high level, it states that every group that is sufficiently ``cohesive'' deserves a
certain amount of representation in the final outcome. 
Therefore, we first need to define what it means for a group of voters in a PB instance
to be cohesive. For this, we follow \citet{PPS21a} and \citet{los:proportional}.\footnote{
\citet{aziz:proportionality} define cohesiveness slightly differently, which leads to
slightly different looking definitions of the axioms. The resulting definitions are, 
however, equivalent.}
 
    \begin{definition} \label{var:representation:definition:cohesiveness}
        Given an ABB instance $( A, P, c, b)$ and a set $T \subseteq P$ of projects, a subset $N' \subseteq N$ of voters is \emph{$T$-cohesive} if and only if $T \subseteq \bigcap_{i \in N'} A_i$ and $c(T) \leq \frac{|N'|}{n} b.$

    \end{definition}

Using this definition, we can now define EJR, which essentially states that in every $T$-cohesive 
group there is at least one voter that derives at least as much satisfaction from the
outcome as from $T$.

    \begin{definition}\label{var:control:EJR:definition:EJR}
       Given an ABB instance $( A, P, c, b )$ and a satisfaction function $\sat$, an outcome  $W \subseteq P$ satisfies \emph{extended justified representation with respect to~$\sat$} (\ejr{$\sat$}) if and only if 
       for any $T$-cohesive $N' \subseteq N$,
       there is some $i \in N'$ such that $\sat_i(W) \geq \sat_i(T)$. 

    \end{definition}
In the following we say that an ABB rule $\PBrule$ satisfies a property (in this case \ejrsat) if and only if, for every ABB instance $( A, P, c, b )$, each outcome in $R(A, P, c, b)$ satisfies this property. %
\Cref{var:control:EJR:definition:EJR} defines a whole class of axioms, one for each satisfaction function $\sat$.
This in contrast to the unit-cost setting, where only one version of the EJR axiom exists. 
This can be explained by the fact that \ejr{$\sat$} and \ejr{$\sat'$} are equivalent in the unit-cost setting for many satisfaction functions $\sat$ and $\sat'$.

    \begin{restatable}{proposition}{ejrunitequiv}
      Consider a unit-cost ABB instance and two additive and cost-neutral satisfaction functions~$\sat$ and~$\sat'$. Then, an outcome satisfies 
    \ejr{$\sat$} if and only if it satisfies \ejr{$\sat'$}.
    \end{restatable}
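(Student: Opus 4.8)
The plan is to reduce both EJR conditions to a common, satisfaction-independent cardinality condition, exploiting the fact that in the unit-cost setting \Cref{lem:1cost} makes any cost-neutral, strictly increasing satisfaction function behave exactly like cardinality.

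First I would fix an arbitrary outcome $W$ and note that $T$-cohesiveness (\Cref{var:representation:definition:cohesiveness}) depends only on the cost function and the budget, not on the satisfaction function; hence the family of $T$-cohesive groups is identical whether we work with $\sat$ or $\sat'$. Next I would unpack the inner condition of EJR (\Cref{var:control:EJR:definition:EJR}). For a $T$-cohesive group $N'$ and a voter $i \in N'$, cohesiveness gives $T \subseteq A_i$, so $A_i \cap T = T$ and the requirement $\sat_i(W) \geq \sat_i(T)$ simplifies to $\sat(A_i \cap W) \geq \sat(T)$.

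The key step is to invoke \Cref{lem:1cost}. Since $\sat$ is additive it is strictly increasing (as observed right after the definition), and it is cost-neutral by assumption, so the lemma applies in the unit-cost instance. Moreover, both $A_i \cap W$ (a subset of the outcome $W$, hence of cost at most $b$) and $T$ (an outcome, since cohesiveness gives $c(T) \le \tfrac{|N'|}{n}b \le b$) are legitimate outcomes in the sense of the lemma. Thus $\sat(A_i \cap W) \geq \sat(T)$ holds if and only if $|A_i \cap W| \geq |T|$. Crucially, the resulting cardinality inequality no longer mentions $\sat$, so the identical reduction applied to $\sat'$ produces exactly the same condition $|A_i \cap W| \geq |T|$.

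Putting these together, $W$ satisfies \ejr{$\sat$} if and only if for every $T$-cohesive $N'$ there is some $i \in N'$ with $|A_i \cap W| \geq |T|$, and this characterization is word-for-word the same for $\sat'$; hence the two notions coincide. I do not anticipate a genuine obstacle, as the argument is essentially a translation through \Cref{lem:1cost}. The only point requiring a moment's care is verifying that the two compared sets, $A_i \cap W$ and $T$, respect the budget so that the lemma is applicable, which follows immediately from $A_i \cap W \subseteq W$ and from the cohesiveness bound $c(T) \le \tfrac{|N'|}{n}b$.
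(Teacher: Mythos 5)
Your proposal is correct and follows essentially the same route as the paper: the paper also proves the statement by showing, via \Cref{lem:1cost}, that \ejrsat coincides with the satisfaction-independent cardinality condition (there, phrased as equivalence to the original ``unit-EJR'' of Aziz et al., i.e., $|A_i \cap W| \ge |T|$ for some $i$ in each cohesive group). Your version is, if anything, slightly tidier, since you use the biconditional form of \Cref{lem:1cost} directly and explicitly verify that $A_i \cap W$ and $T$ are outcomes so the lemma applies, points the paper's proof handles implicitly.
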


\begin{proof}
Assume that $\sat$ is a cost-neutral, strictly increasing
satisfaction function. To show our statement, we show that \ejrsat is equivalent to EJR in the unit cost setting \citep{aziz:justified}. In the follwing, we refer to this axiom as \textit{unit-EJR}.
An outcome $W$ satisfies unit-EJR if every $\ell$-cohesive\footnote{In the terminology of approval-based committee voting, a group is $\ell$-cohesive for a natural number $\ell$ if it is $T$-cohesive for some $T \subseteq P$ with $|T|=\ell$.} group contains a voter $i$ such that $|A_i \cap W| \geq \ell$.  

First, let $W \subseteq P$ be an outcome that satisfies unit-EJR. We show that $W$ satisfies \ejrsat as well.  
Let $N'$ be a $T$-cohesive group. For $\ell=|T|$, this group is $\ell$-cohesive
 (in unit-EJR terminology). Since $W$ satisfies unit-EJR, we know that that $|A_i \cap W| \geq \ell = |T|$ for some $i \in N'$. By \Cref{lem:1cost}, this implies $\sat(A_i \cap W) \geq \sat(T)$, or, equivalently, $\sat_i(W) \geq \sat_i(T)$.
Since $N'$ was chosen arbitrarily, this means that \ejrsat is satisfied.

Next, let $W \subseteq P$ be an outcome that does \textit{not} satisfy \mbox{unit-EJR}.
Then, there is a $\ell$-cohesive group $N'$ for which $|A_i \cap W| < \ell$ for all $i \in N'$. By definition, there exists a \mbox{$T \subseteq P$} such that $S$ is $T$-cohesive and $|T| = \ell$.
By \Cref{lem:1cost}, we have $\sat(A_i \cap W) < \sat(T)$ for all $i \in N'$, or equivalently, $\sat_i(W) < \sat_i(T)$ for all  $i \in N'$.
Therefore, \ejrsat is violated as well.
\end{proof}

Moreover, under these assumptions, \ejr{$\sat$} is equivalent to EJR as originally defined originally by \citet{aziz:justified}.
    By contrast, this is not the case, e.g., for \ejr{$\sat^{\text{CC}}$}.

Next, we show that \ejrsat is always satisfiable. Our proof adapts a similar
proof for general additive utility functions \citep{PPS21a} and employs the so-called Greedy Cohesive Rule.\footnote{Our result is
less general in that it only considers the approval case and more general in that it does
not assume additivity.}

    \begin{definition}
       \label{var:EJR:definition:greedycohesive}
       Given an ABB instance $( A, P, c, b )$, 
       the \emph{Greedy Cohesive Rule with respect to $\sat$ (\textsc{GCR[$\sat$]})} selects all outcomes $W$ that may result from the following procedure:
       
       \smallskip
       
       \begin{lstlisting}[escapeinside={(*}{*)}, language=Python]    
(*$W$*) := (*$\emptyset$*)
(*$N^{W}$*) := (*$N$*)
(*$\mathcal{C}(W, N^W)$*) := (*$\{W' \mid W' \subseteq P\backslash W \text{ and there exists } \\
\text{a $W'$-cohesive group } N' \subseteq N^W\}$*)
while (*$\mathcal{C}(W, N^W) \neq \emptyset$*)
    (*$W'$*) := (*$\argmax_{\widehat{W} \in \mathcal{C}(W,N^W)}\sat(\widehat{W})$*)
    (*$W$*) := (*$W \cup W'$*)
    (*$N^W$*) := (*$N^W \backslash N'$*)
    (*$\mathcal{C}(W, N^W)$*) := (*$\{W' \mid W' \subseteq P \backslash W \mbox{ and there exists } \\ 
    \text{a $W'$-cohesive group } N' \subseteq N^W\}$*)
return (*$W$*)
\end{lstlisting}
    \end{definition}

    \noindent This version of \textsc{GCR[$\sat$]} differs from the corresponding rule introduced by~\citet{PPS21a}, as they assume a tie-breaking mechanism for equally good subsets of $P \backslash W$ in favor of subsets of lower costs while in our rule the tie-breaking is done in an arbitrary fashion. This was purely done for ease of exposition, as we believe that breaking ties in favor of lower costs would have resulted in a algorithm of reduced readability. Another reason for our modified version of \textsc{GCR} is that it is already strong enough to always satisfy \ejrsat for any -- even non-additive -- satisfaction function $\sat$. The proof is essentially the same as the proof due to~\citet{PPS21a} because neither tie-breaking in favor of smaller costs nor additivity of satisfaction is assumed anywhere in the proof.

    \begin{restatable}{theorem}{greedyc}\label{var:control:EJR:EJR-s-and-BPJR-s-satisfiable}
    \ejrsat is always satisfiable for any satisfaction function $\sat$.
    \end{restatable}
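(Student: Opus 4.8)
The plan is to prove this (existence) statement constructively, by showing that the Greedy Cohesive Rule \textsc{GCR[$\sat$]} always returns at least one outcome satisfying \ejrsat. Since ``\ejrsat is satisfiable'' only asks for the existence of one good outcome, it suffices to analyse a single, suitably chosen run of the procedure and verify that its output $W^*$ has the desired property.

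First I would check that \textsc{GCR[$\sat$]} is well defined, i.e., that every run terminates and returns a feasible outcome. Termination is immediate, since each iteration deletes a nonempty group $N'$ from $N^W$, so there are at most $n$ iterations. Feasibility follows from a disjointness/charging argument: in the iteration that selects $W'$, the removed group $N'$ is $W'$-cohesive, hence $c(W') \le \frac{|N'|}{n}\,b$; as the groups removed across iterations are pairwise disjoint subsets of $N$, the total cost satisfies $c(W^*) \le \sum \frac{|N'|}{n}\,b \le b$, so $W^*$ is an outcome.

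The core of the argument is to verify \ejrsat for $W^*$. Fix an arbitrary $T$-cohesive group $N'$; I must exhibit $i\in N'$ with $\sat_i(W^*)\ge \sat_i(T)=\sat(T)$. Two facts drive the proof. First, if a voter $i$ is removed in the iteration selecting $W'$, then $W'\subseteq A_i$ by cohesiveness and $W'\subseteq W^*$, so $\sat_i(W^*)\ge \sat(W')$ by monotonicity of $\sat$. Second, as long as no member of $N'$ has been removed, $N'$ is still fully active, so (the unselected part of) $T$ remains a cohesive candidate in $\mathcal{C}(W,N^W)$. I then split into two cases. If no member of $N'$ is ever removed, then at termination $T\setminus W^*$ would still be a live cohesive candidate through $N'$ unless $T\subseteq W^*$; since the loop stopped with $\mathcal{C}=\emptyset$, we must have $T\subseteq W^*$, and then every $i\in N'$ enjoys $A_i\cap W^*\supseteq T$, giving $\sat_i(W^*)\ge \sat(T)$. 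Otherwise, let $i_0\in N'$ be the first member removed, in the iteration selecting $W'$. At that moment $N'$ is still fully active, so $T$ is among the candidates, and because \textsc{GCR[$\sat$]} selects a candidate of maximum satisfaction we obtain $\sat(W')\ge \sat(T)$; combined with the first fact this yields $\sat_{i_0}(W^*)\ge \sat(W')\ge\sat(T)$. In both cases some member of $N'$ reaches $\sat(T)$, so \ejrsat holds.

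The step I expect to be delicate is making the comparison ``$\sat(\text{selected set})\ge\sat(T)$'' robust once part of $T$ has been bought in earlier iterations, because then only $T\setminus W$, not $T$ itself, is an available candidate, and the maximality of the selected $W'$ only gives $\sat(W')\ge\sat(T\setminus W)$. For additive $\sat$ this is harmless: one recombines the already-selected part $T\cap W$ (which lies in $A_{i_0}\cap W^*$) with the freshly selected $W'$ and uses $\sat(W')\ge\sat(T\setminus W)$ to recover $\sat(T)$, exactly as in the additive argument of \citet{PPS21a}. For a general, non-additive $\sat$ this recombination can fail, so the clean way to push the argument through for arbitrary $\sat$ is to have the greedy compare candidates by the satisfaction of the \emph{full} cohesive set rather than of its unselected residual, which guarantees directly that the set chosen at the first removal of an $N'$-member dominates $\sat(T)$. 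Pinning down this comparison is the main obstacle; everything else is routine bookkeeping with the monotonicity of $\sat$.
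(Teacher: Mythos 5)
Your proposal follows the same skeleton as the paper's proof: run \textsc{GCR[$\sat$]}, establish feasibility by charging each selected set $W'$ to its (pairwise disjoint) removed group via $c(W') \le \frac{|N'|}{n}b$, and then argue about the first voter $i_0 \in N'$ removed by the algorithm, using the argmax step to compare the set selected at that moment against $T$. The one place where you deviate is precisely the ``delicate step'' you flag at the end, and you are right to flag it: it is a genuine gap in the paper's own argument, not just a technicality. The paper asserts that, in the round where $i_0$ is removed via a $T'$-cohesive group, ``since the algorithm chose $T'$ instead of $T$, we have $\sat(T') \geq \sat(T)$.'' This presupposes $T \in \mathcal{C}(W, N^W)$, i.e., $T \subseteq P \setminus W$, but the proof only establishes $T \not\subseteq W$; if other groups bought part of $T$ in earlier rounds, only $T \setminus W$ is a candidate and the argmax yields merely $\sat(T') \ge \sat(T \setminus W)$. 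Your additive repair (recombining the already-purchased part: $\sat_{i_0}(W) \ge \sat(T') + \sat(T \cap W) \ge \sat(T \setminus W) + \sat(T \cap W) = \sat(T)$) closes this hole exactly as in the additive argument of \citet{PPS21a}, and your proposed modification for general $\sat$ --- ranking candidates by the satisfaction of the \emph{full} cohesive set $T$ while purchasing only $T \setminus W$, which preserves the feasibility charging since $c(T \setminus W) \le c(T)$ --- restores $\sat(T') \ge \sat(T)$ directly and proves the theorem for arbitrary satisfaction functions.

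To see that your caution is necessary (and that the paper's remark that ``additivity of satisfaction is not assumed anywhere in the proof'' is too quick for the rule as literally defined), consider two voters with $b = 4$, projects $p, q, s, t$ with $c(p) = c(q) = c(s) = 1$ and $c(t) = 2$, approval sets $A_1 = \{p, q, t\}$ and $A_2 = \{p, s\}$, and any monotone $\sat$ with $\sat(\{p,s\}) = 11$, $\sat(\{p,q\}) = 10$, $\sat(\{t\}) = 2$, $\sat(\{p,t\}) = 3$, and value $1$ on the remaining singletons. \textsc{GCR[$\sat$]} first selects $\{p,s\}$ (satisfaction $11$) and removes voter $2$; in the second round the only candidates for voter $1$ (budget share $2$) are $\{q\}$ and $\{t\}$, since $\{q,t\}$ costs $3$, so the rule selects $\{t\}$ and removes voter $1$. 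Voter $1$ ends with $\sat(A_1 \cap W) = \sat(\{p,t\}) = 3$, although $\{1\}$ is $\{p,q\}$-cohesive with $\sat(\{p,q\}) = 10$, so \ejrsat fails on this run. Your version of the comparison avoids exactly this failure, so your proposal is a correct proof of satisfiability and, at the critical step, more careful than the published one.
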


\begin{proof}
We show that the output of GCR[$\sat$] always is an outcome that satisfies \ejrsat.
    
    \smallskip
    \textit{Feasibility:} Assume $W$ is a set of projects computed by \textsc{GCR[$\sat$]}. For any $W'$ that we chose at line 5, we know by the definition of cohesiveness that
    
    \[ c(W') \leq \frac{|S|\cdot b}{n} \]
    
    and since any voter in $N$ is only removed as a member of one such group, we know that for the cost of the outcome $c(W)$ it holds that
    
    \[ c(W) \leq \frac{|N|\cdot b}{n} = b \]
    
    and thus $W$ is a feasible outcome.
   
    \smallskip
    \textit{Satisfaction:} Assume to the contrary that there is a satisfaction function $\sat$ and ABB instance $( A, P, c, b )$, such that some outcome $W$ in \textsc{GCR[$\sat$]}($A, P, c, b$) does not satisfy \ejrsat. This means that by the definition of \ejrsat there is a $T$-cohesive group $N' \subseteq N$, such that for any $i \in N'$ we have
    
    \begin{align}\label{proof:gcr:1}
        \sat_i(W) < \sat(T)
    \end{align}
    
    By the strict increasingness of satisfaction functions this implies $T \not\subseteq W$.
    
    Now suppose $i_0$ is the first voter in $N'$ removed by the algorithm. Such a voter must exist, since otherwise we know that it holds that $N' \subseteq N^W$ at the last execution of the loop condition and thus $T$ could be added to $W$, which in turn means it was not the last execution of line 3.
    
    So $i_0$ was removed as a member of some $T'$-cohesive group $N''$. Since $T'\subseteq W$ and by $T'$-cohesiveness, we know that for any $i \in N''$
    \[\sat_i(W) \geq \sat(T')\]
    and since $i_0 \in N''$ thus
    $$\sat_{i_0}(W) \geq \sat(T') $$
    
    Since the algorithm chose $T'$ and $N''$ instead of $T$ and $N'$ at line 4, we know by line 6 that
    \[ \sat(T') \geq \sat(T) \]
    and thus for $i_0$ it holds that
    \[\sat_{i_0}(W) \geq \sat(T)\]
    contradicting (\ref{proof:gcr:1}).
    \end{proof}

The Greedy Cohesive Rule that is used to prove  \Cref{var:control:EJR:EJR-s-and-BPJR-s-satisfiable} has exponential running time. This is however unavoidable, as we can show that no algorithm can find an allocation satisfying \ejrsat in polynomial time (unless P = NP), for a large class of approval-based satisfaction functions. We call this class strictly cost-responsive.

\begin{definition}
We say that a satisfaction function $\sat$ is \emph{strictly cost-responsive}
if for all $W, W' \subseteq P$ with $c(W) < c(W')$, we have $\sat(W) < \sat(W')$.
\end{definition}

This class includes $\satCost$ but also functions with
diminishing (but not vanishing) marginal satisfaction like~$\satSqrt$.

\begin{restatable}{theorem}{nphardejr}\label{thm:NP-EJR}
Let $\sat$ be a satisfaction function
that is strictly cost-responsive for instances with a single voter.
Then, there is no polynomial-time algorithm that, given an ABB 
instance $( A, P, c, b )$ as input, always computes an outcome
satisfying \ejrsat, unless $P = NP$.
\end{restatable}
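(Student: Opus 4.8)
The plan is to reduce from the NP-complete \textsc{Subset-Sum} problem. The crucial observation is that on a single-voter instance in which the voter approves every project, \ejrsat collapses into the requirement that the chosen outcome \emph{maximize} $\sat$ over all feasible sets, and under strict cost-responsiveness this in turn becomes the problem of finding a feasible set of maximum total cost.

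First I would analyze precisely what \ejrsat demands on an instance $( A, P, c, b)$ with $N = \{1\}$ and $A_1 = P$. A set $T \subseteq P$ induces a $T$-cohesive group exactly when $T \subseteq \bigcap_{i \in N'} A_i = A_1 = P$ and $c(T) \leq \frac{|N'|}{n} b = b$, i.e., exactly when $T$ is a feasible outcome; the only candidate group is $N' = \{1\}$. Since $A_1 \cap W = W$ and $A_1 \cap T = T$, \ejrsat therefore requires $\sat(W) \geq \sat(T)$ for \emph{every} feasible $T$. In other words, any outcome $W$ satisfying \ejrsat must maximize $\sat$ among all feasible outcomes (and such a maximizer exists, so the requirement is non-vacuous). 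Next I would turn this into cost-maximization using strict cost-responsiveness: letting $W^\ast$ be a feasible set of maximum cost and applying the inequality above with $T = W^\ast$ gives $\sat(W) \geq \sat(W^\ast)$, and the contrapositive of strict cost-responsiveness (for single-voter instances) forces $c(W) \geq c(W^\ast)$, so $c(W)$ equals the maximum feasible cost.

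The reduction then follows. Given a \textsc{Subset-Sum} instance consisting of positive integers $a_1, \dots, a_m$ and a target $t$, I would construct the single-voter instance with projects $p_1, \dots, p_m$, costs $c(p_j) = a_j$, approval ballot $A_1 = P$, and budget $b = t$. Suppose, for contradiction, that a polynomial-time algorithm always returns an outcome satisfying \ejrsat; run it to obtain $W$ and compute $c(W)$ in polynomial time. By the analysis above, $c(W)$ is the largest subset sum not exceeding $t$. Because all $a_j$ are integers, the original instance is a \textsc{yes}-instance if and only if some subset sums to exactly $t$, which happens if and only if $c(W) = t$. Comparing $c(W)$ with $t$ thus decides \textsc{Subset-Sum} in polynomial time, so $P = NP$.

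The step requiring the most care is the first one: arguing cleanly that, on the all-approving single-voter instance, satisfying \ejrsat is \emph{exactly} the same as maximizing $\sat$ over feasible sets — in particular that the single cohesive group witnesses every feasible $T$. Once this is pinned down, the passage to maximum cost is immediate from strict cost-responsiveness, and the remaining encoding is a routine \textsc{Subset-Sum} argument. I would also remark that this establishes only weak NP-hardness, which already suffices for the stated ``unless $P = NP$'' conclusion.
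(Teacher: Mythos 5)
Your proposal is correct and matches the paper's own proof essentially step for step: the same reduction from \textsc{Subset-Sum} with a single voter approving all projects, costs $c(p_j)$ equal to the input integers, budget $b = t$, and the same use of strict cost-responsiveness to convert the \ejrsat guarantee $\sat(W) \geq \sat(T)$ into $c(W) \geq c(T)$ so that checking $c(W) = t$ decides the instance. Your preliminary observation that \ejrsat collapses to $\sat$-maximization over all feasible sets on such instances is a slightly more explicit framing than the paper's (which only invokes cohesiveness for a set of cost exactly $b$), but it is the same argument.
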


\begin{proof}
		
Assume that there is an algorithm $\mathbb{A}$ that always computes an
allocation satisfying \ejrsat.

We will make use of the \textsc{Subset-Sum} problem, which is known to be NP-hard.
In this problem, we are given as input a set $S = \{s_1, \ldots, s_m\}$ of integers and 
a target $t \in \mathbb{N}$ and we wonder whether there exists an $X \subseteq S$ 
such that $\sum_{x \in X} x = t$.
		
Given $S$ and $t$ as described above, we construct an ABB instances as follows.
We have $m$ projects $P = \{p_1, \ldots, p_m\}$ with the following cost function $c(p_j) = s_j$
for all $j \in \{1, \ldots, m\}$ and a budget limit $b = t$.
There is moreover only one voter, who approves of all the projects.
		
Now, $(S,t)$ is a positive instance of \textsc{Subset-Sum} if and only if there
is an outcome $T$ that cost is exactly $b$.
If such an allocation $T$ exists, then the one voter $1$ is $T$-cohesive.
Therefore, any allocation $W$ that satisfies \ejrsat must give that voter
$\sat(W) \geq \sat(T)$. By strict cost-responsiveness, this implies that
$c(W) \geq c(T) = b$.
Hence, $(S,t)$ is a positive instance of \textsc{Subset-Sum} if and only if
$c(\mathbb A(( A, P, c, b ))) = b$. This way, we can
use $\mathbb{A}$ to solve \textsc{Subset-Sum} in polynomial time.
\end{proof}

Note that $\satNum$ does not satisfy
strict cost-responsiveness. Indeed, outcomes satisfying \ejr{$\satNum$} can be computed efficiently, e.g., by employing \rx{$\satNum$} \citep{PPS21a, los:proportional}.
Further, we note that our reduction does not preclude efficient algorithms in the case that costs are bounded. Hence, it is open whether a pseudopolynomial-time algorithm exists.

Theorem~\ref{thm:NP-EJR} motivates us to consider weakenings of EJR.
First, we define EJR up to one project \citep{PPS21a}.

\begin{definition}\label{var:representation:definition:EJR-1}
 Given an ABB instance $( A, P, c, b )$ and a satisfaction function $\sat$, an outcome  $W \subseteq P$ satisfies \emph{EJR up to one project with respect to $\sat$} (\ejrosat) if and only if, 
 for every $T$-cohesive group $N'$, either $T \subseteq W$
or there exists a voter $i \in N'$ and a project $p \in P \setminus W$ such that 
	\(\sat_i(W \cup \{p\}) > \sat_i(T).\)

\end{definition}

\citet{PPS21a} have shown that we can satisfy \ejrosat for every additive satisfaction function $\sat$ using \rx{$\sat$}.\footnote{In the approval-based setting considered in this paper,
this is even true if we strengthen \ejrosat by requiring that the project $p$ comes from $T$,
i.e., by replacing $p \in P \setminus W$ with $p \in T \setminus W$ in Definition~\ref{var:representation:definition:EJR-1} (see 
the full version of this paper 
for details).}
Since the approval-based setting studied in this paper is a special case of the setting studied by \citet{PPS21a}, we can improve upon their result.  
Similar to the fair division literature, where the notion of envy-freeness up to one good (EF-1) can be strengthened to envy-freeness up to \textit{any} good (\mbox{EF-x}) \citep{CKMPSW19},
we strengthen \ejrosat to~\mbox{\ejrxsat}: Instead of requiring that there exists one project whose addition lets voter $i$'s satisfaction exceed $\sat(T)$, we require that this holds for \textit{every} unchosen project from $T$. %

\begin{definition}\label{var:representation:definition:EJR-x}
Given an ABB instance $( A, P, c, b )$ and a satisfaction function $\sat$, an outcome  $W \subseteq P$ satisfies \emph{EJR up to any project with respect to $\sat$} (\ejrxsat)
if and only if, 
for every $T$-cohesive group $N'$,  %
there is a voter $i \in N'$ such
that
	\(\sat_i(W \cup \{p\}) > \sat_i(T)\)
for every project $p \in T \setminus W$.

\end{definition}
By definition, \ejrxsat implies \ejrosat and, 
intuitively, we would assume that \ejrxsat is implied by \ejrsat.
This is indeed the case, at least
for strictly increasing satisfaction functions.
Moreover, \ejrsat, \ejrosat and \ejrxsat are equivalent in the unit-cost setting
as long as $\sat$ is strictly increasing and cost-neutral.
 
\begin{restatable}{proposition}{propejr} \label{var:representation:proposition:EJR-implies-EJR-x}
Let $\sat$ be a strictly increasing satisfaction function. Then,
\begin{enumerate}
    \item[(i)] \ejrsat implies \ejrxsat, and 
    \item[(ii)] for unit-cost instances if $\sat$ is cost-neutral, both \ejrosat and \ejrxsat are equivalent to \ejrsat.
\end{enumerate}

    \end{restatable}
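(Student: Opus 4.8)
The plan is to treat the two parts separately, using throughout the observation that if $N'$ is $T$-cohesive then $T \subseteq A_i$ for every $i \in N'$, so that $A_i \cap T = T$ and hence $\sat_i(T) = \sat(T)$ for all $i \in N'$. This lets me replace the per-voter quantity $\sat_i(T)$ by the uniform value $\sat(T)$ everywhere.

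For part (i), I would start from an outcome $W$ satisfying \ejrsat and an arbitrary $T$-cohesive group $N'$. Then \ejrsat hands me a voter $i \in N'$ with $\sat_i(W) \ge \sat(T)$, and I claim this same voter witnesses \ejrxsat. Indeed, fix any $p \in T \setminus W$. Since $p \in T \subseteq A_i$ and $p \notin W$, the project $p$ lies in $A_i$ but not in $A_i \cap W$, so $A_i \cap W \subsetneq A_i \cap (W \cup \{p\})$. Strict increasingness then gives $\sat_i(W \cup \{p\}) > \sat_i(W) \ge \sat(T) = \sat_i(T)$. Because $i$ was chosen before $p$ and the argument holds for every $p \in T \setminus W$, this single voter $i$ satisfies the \ejrxsat condition for $N'$, as required.

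For part (ii), I would first note that two of the needed implications are already free: \ejrsat $\Rightarrow$ \ejrxsat is part (i) (which uses only strict increasingness), and \ejrxsat $\Rightarrow$ \ejrosat holds by definition, since if $T \not\subseteq W$ then any $p \in T \setminus W \subseteq P \setminus W$ combined with the \ejrxsat-witness yields the \ejrosat condition. It therefore suffices to close the cycle with \ejrosat $\Rightarrow$ \ejrsat in the unit-cost, cost-neutral, strictly increasing setting. Let $W$ satisfy \ejrosat and let $N'$ be $T$-cohesive. If $T \subseteq W$, then $A_i \cap W \supseteq T$ for every $i \in N'$, and monotonicity of $\sat$ immediately gives $\sat_i(W) \ge \sat(T)$. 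Otherwise \ejrosat supplies a voter $i \in N'$ and a project $p \in P \setminus W$ with $\sat_i(W \cup \{p\}) > \sat(T)$.

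The crux — and the step I expect to be the main obstacle — is to \emph{remove} the auxiliary project $p$ from this strict inequality without losing the bound; this is exactly where the unit-cost structure must be used. By \Cref{lem:1cost}, in a unit-cost instance cost-neutrality and strict increasingness make $\sat(S)$ a strictly increasing function of $|S|$ (the underlying argument compares cardinalities of arbitrary subsets, not merely feasible ones), so $\sat_i(W \cup \{p\}) > \sat(T)$ forces $|A_i \cap (W \cup \{p\})| > |T|$, i.e.\ $|A_i \cap (W \cup \{p\})| \ge |T| + 1$. Since adjoining one project enlarges the intersection with $A_i$ by at most one, $|A_i \cap W| \ge |A_i \cap (W \cup \{p\})| - 1 \ge |T|$, and a second application of \Cref{lem:1cost} yields $\sat_i(W) \ge \sat(T) = \sat_i(T)$. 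This establishes \ejrsat for $N'$, and as $N'$ was arbitrary the three notions coincide.
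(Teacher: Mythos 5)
Your proof is correct and follows essentially the same route as the paper's: part (i) applies strict increasingness to the voter witnessing \ejrsat, and part (ii) uses \Cref{lem:1cost} to translate satisfaction comparisons into cardinality comparisons and exploits integrality ($|X| > |T|$ iff $|X| \ge |T|+1$) to drop the auxiliary project. If anything, your closing implication \ejrosat $\Rightarrow$ \ejrsat is slightly more careful than the paper's chain of equivalences: you allow the EJR-1 witness $p \in P \setminus W$ to lie outside $T$ (or even outside $A_i$) via the bound $|A_i \cap (W \cup \{p\})| \le |A_i \cap W| + 1$, whereas the paper's chain only treats $p \in T \setminus W$.
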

 
\begin{proof}

For \textit{(i)}, assume that $W$ satisfies \ejrsat. This means that for any $T$-cohesive $N'$, there is some $i \in N'$ with
           \[ \sat_i(A_i \cap W) \geq \sat(T). \]
By strict increasingness, for any $p \in \bigcap_{i \in S}A_i \setminus W$ we have
            \[ \sat(\bigcup_{i \in S} A_i \cap W \cup \{p\}) > \sat(\bigcup_{i \in S} A_i \cap W) \geq \sat(T),\]
and thus \pjrxsat holds.

\smallskip

For \textit{(ii)}, consider a unit-cost instance $( A, P, c, b)$, an outcome $W \subseteq P$, and a $T$-cohesive group $N'$. If $T\subseteq W$, the requirements of \ejrsat, \ejrosat, and \ejrxsat are all satisfied.\footnote{Note that there is a typo in \Cref{var:representation:definition:EJR-1} in the main text. The correct definition is as follows: An outcome  $W \subseteq P$ satisfies \ejrxsat if and only if, for every $T$-cohesive group $N'$, either $T \subseteq W$ or there is a voter $i \in N'$ such that $\sat_i(W \cup \{p\}) > \sat_i(T)$ for every project $p \in T \setminus W$. (The part ``either $T \subseteq W$ or'' is missing in \Cref{var:representation:definition:EJR-1}.)} 
Therefore, we assume  $T \setminus W\neq \emptyset$ and consider a project $p \in T \setminus W$. In order to show the equivalence of the three notions, it is sufficient to show that $\sat_i(W \cup \{p\}) > \sat_i(T)$ is equivalent to $\sat_i(W) \ge \sat_i(T)$. This is in fact true, as the following chain of equivalences show:
\begin{align*}
    \sat_i(W) \ge \sat_i(T) 
    &\Leftrightarrow \sat(A_i \cap W) \ge \sat(T) \\
    &\Leftrightarrow |A_i \cap W| \ge |T| \\
    &\Leftrightarrow |A_i \cap (W \cup \{p\})|  > |T| \\
    &\Leftrightarrow  |A_i \cap W| + 1 > |T| \\    
    &\Leftrightarrow \sat(A_i \cap (W \cup \{p\})) > \sat(T) \\
    &\Leftrightarrow \sat_i(W \cup \{p\}) > \sat_i(T). 
\end{align*}
Here, we have used $p \in T \subseteq A_i$ and \Cref{lem:1cost}.
\end{proof}

The following example illustrates the difference between \ejrxsat and \ejrosat. 

\begin{example}
Consider one voter and five projects $p_1, p_2, p_3,p_4$ and $p_5$, all approved by this voter. 
The costs and the additive satisfaction function are defined as follows.
\begin{center}
\begin{tabular}{ c|c c c c c }
                  & $p_1$ & $p_2$ & $p_3$ & $p_4$ & $p_5$\\
      \hline
      $c(\cdot)$  & 2.5    & 2.5    &   2.5   &  3    &   4.5 \\
      $\mu(\cdot)$  & 0.1    & 0.1    &   0.1   &  3.1    &   4
      
\end{tabular}
\end{center}
 Let $b = 7$. The single voter is $\{p_1, p_5\}$-cohesive with $\mu(\{p_1, p_5\}) = 4.1$. For this instance, there are three exhaustive outcomes (if one treats $p_1$, $p_2$, and $p_3$ the same).  
 The first one, $\{p_1, p_5\}$, satisfies \ejrsat (and thus also \ejrxsat and \ejrosat). 
The second one, $\{p_2, p_3\}$, violates \ejrxsat since $\mu(\{p_2, p_3\} \cup \{p_1\}) = 0.3 < \mu(\{p_1, p_5\})$; 
 it, however, satisfies \ejrosat since $\mu(\{p_2, p_3\} \cup \{p_5\}) = 4.2 > \mu(\{p_1, p_5\})$. 
 Similarly, $\{p_1, p_4\}$ also satisfies \ejrosat but not \ejrxsat.
\label{ex_diff_x_o}
\end{example}

Having observed that \ejrxsat is strictly stronger than \ejrosat, a natural  question is whether \rx{$\sat$} also satisfies \ejrxsat. This is not the case in general. In \Cref{ex_diff_x_o}, \rx{$\sat$} would first select $p_4$ and then one of $\{p_1, p_2, p_3\}$, and would thus violate \ejrxsat.
However, if we restrict attention to \dns functions $\sat$, we can show that \rx{$\sat$}
always satisfies \ejrxsat.

\begin{restatable}{theorem}{ejrxproof}\label{Rule-X-EJR-x-margUtil}
Let $\sat$ be a \dns function. Then \rx{$\sat$} satisfies \ejrxsat. 
\end{restatable}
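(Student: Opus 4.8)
The plan is to show that the outcome $W$ produced by \rx{$\sat$} satisfies \ejrxsat by the standard MES analysis: argue via a budget/payment counting argument that any $T$-cohesive group $N'$ must contain a voter who has spent "enough" of her per-capita budget, and then translate this spending back into a satisfaction guarantee using the \dns structure. Fix a $T$-cohesive group $N'$ with $T \setminus W \neq \emptyset$ (otherwise the condition holds trivially). By $T$-cohesiveness we have $c(T) \le \frac{|N'|}{n}b$, so the total budget $\frac{|N'|}{n}b$ available to the group is at least $c(T)$. I would define, for the run of \rx{$\sat$}, the final budget $b_i$ left over for each voter $i$, and the total amount each $i \in N'$ has paid, namely $\frac{b}{n} - b_i$.

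\textbf{Key steps.} First I would identify a voter $i \in N'$ whose leftover budget $b_i$ is large — specifically, I would argue that not all of the group's budget could have been spent at a favorable rate, so some voter $i$ retains budget $b_i > 0$ satisfying a bound of the form $b_i \ge \frac{c(T)}{|N'|} - (\text{what she paid per unit of satisfaction})\cdot \sat_i(W)$, or more directly that the "price per satisfaction" $\rho$ that $i$ faced when \rx{$\sat$} halted is controlled. The crux is the halting condition: when \rx{$\sat$} stops, no project in $P \setminus W$ is $\rho$-affordable for any $\rho$, which for a project $p \in T \setminus W$ (approved by all of $N'$) means $\sum_{i \in N'} b_i < c(p)$ is impossible to overcome even at arbitrarily large $\rho$ — so the total residual budget of $N'$ must be strictly less than $c(p)$ for every such $p$. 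Combined with the fact that the group started with $\frac{|N'|}{n}b \ge c(T)$, this forces the group to have collectively paid at least $c(T) - \min_{p \in T\setminus W} c(p)$, which I would convert into a per-voter satisfaction bound via the \dns inequality $\frac{\sat(p)}{c(p)} \ge \frac{\sat(p')}{c(p')}$ for $c(p) \le c(p')$.

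\textbf{Where \dns enters.} The two \dns inequalities are what make the up-to-\emph{any}-project strengthening work rather than just up-to-one. The monotonicity $\sat(p) \le \sat(p')$ for $c(p) \le c(p')$ lets me bound the marginal satisfaction of the cheapest missing project, and the decreasing-normalized-satisfaction inequality lets me lower-bound the satisfaction a voter has already accumulated per unit of budget she spent: intuitively, since she only ever paid for projects at rate $\rho = \frac{\text{cost}}{\text{satisfaction}}$ units of budget per unit of satisfaction, and \dns caps how unfavorable this rate can be relative to any project in $T$, the satisfaction $\sat_i(W)$ she obtained for her spending dominates $\sat_i(T)$ minus the contribution of a single (the most expensive unpaid) project of $T$. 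I would then show that adding \emph{any} remaining $p \in T \setminus W$ pushes $\sat_i(W \cup \{p\})$ strictly above $\sat_i(T)$, using that $p$ is at least as cheap-to-satisfying as the worst project and that additivity gives $\sat_i(W \cup \{p\}) = \sat_i(W) + \sat(p)$.

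\textbf{Main obstacle.} The hard part will be making the budget-accounting argument correctly identify a \emph{single} voter $i$ for whom the bound holds simultaneously for \emph{every} $p \in T \setminus W$, rather than a different voter per project. I expect to handle this by picking $i$ to be the voter in $N'$ with the largest residual budget (or, equivalently, who paid the least), so that $b_i \ge \frac{1}{|N'|}\sum_{j \in N'} b_j$ and $i$'s accumulated spending $\frac{b}{n} - b_i$ is at most the group average; then a careful application of both \dns inequalities, together with the halting condition bounding $\sum_{j \in N'} b_j$ against $\max_{p \in T \setminus W} c(p)$, should yield the uniform strict inequality $\sat_i(W \cup \{p\}) > \sat_i(T)$ for all $p \in T \setminus W$ at once. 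Pinning down the exact residual-budget inequality and verifying the \dns estimate is tight enough to beat the strict inequality is the delicate calculation I would need to execute with care.
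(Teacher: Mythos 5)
Your high-level ingredients are the right ones (the halting condition $\sum_{i\in N'} b_i < c(p)$ for every $p \in T\setminus W$, payment accounting against the group endowment $\frac{|N'|}{n}b \ge c(T)$, and the two \dns inequalities), but two steps of the plan would fail as written. First, the pivot project: your satisfaction bound is anchored at the \emph{most expensive} unpaid project (``dominates $\sat_i(T)$ minus the contribution of the most expensive unpaid project of $T$'', and later bounding $\sum_{j\in N'}b_j$ against $\max_{p\in T\setminus W}c(p)$). That yields only $\sat_i(W) > \sat_i(T)-\sat(p_{\max})$, i.e.\ an EJR-$1$-type guarantee: adding a \emph{cheaper} $p\in T\setminus W$, whose $\sat(p)$ may be much smaller than $\sat(p_{\max})$, need not exceed $\sat_i(T)$. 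The paper's proof turns exactly this knob relative to its \ejrosatplus{} analysis (Theorem~\ref{Rule-X-EJR-1}): it takes $p^*$ to be the \emph{cheapest} project of $T\setminus W$. Then the second \dns condition makes $p^*$ the project of maximal normalized satisfaction $q^*(p)=|N'|\sat(p)/c(p)$ among $T\setminus W$, so the affordability comparison $q_{\min}\ge q^*(p^*)=q^*_{\max}$ goes through; and once $\sat_{i^*}(W\cup\{p^*\})>\sat_{i^*}(T)$ is proved for this single cheapest $p^*$, the first \dns condition plus additivity ($c(p)\ge c(p^*)\Rightarrow\sat(p)\ge\sat(p^*)$) transfers it to \emph{every} $p\in T\setminus W$ simultaneously\,---\,which also dissolves your ``main obstacle'': you never need a bound uniform over $p$ directly; one voter and the cheapest missing project suffice.

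Second, the voter selection. Picking the voter with the largest residual budget and converting her whole-run spending into satisfaction does not work, because the rate guarantee is time-local: a payment is made at satisfaction-per-money at least $q^*(p^*)$ only in rounds where \emph{all} voters of $N'$ still hold at least $c(p^*)/|N'|$, since only then is $p^*$ guaranteed $\rho$-affordable at that price and hence blocks worse purchases. After the first voter of $N'$ falls below this threshold, the remaining members\,---\,including your max-residual voter\,---\,may pay at arbitrarily bad rates, so aggregate spending no longer lower-bounds satisfaction; moreover ``spent at most the group average'' points the wrong way, since you need a voter who demonstrably spent \emph{much, at good rates}. The paper instead defines $k^*$ as the first round after which some $i^*\in N'$ has budget below $c(p^*)/|N'|$ (such a round exists, else $p^*$ would stay affordable and \rx{$\sat$} could not halt), truncates to $W^*=W_{k^*}$, and does all accounting for $i^*$ up to round $k^*$: her load then strictly exceeds $\frac{b}{n}-\frac{c(p^*)}{|N'|} \ge \frac{c(T\setminus\{p^*\})}{|N'|}$, and after subtracting the at most $c(p)/|N'|$ she can have paid toward each $p\in T\cap W^*$ (a decomposition your sketch omits), her payments toward $W^*\setminus T$ strictly exceed $c(T\setminus(W^*\cup\{p^*\}))/|N'|$, each made at rate at least $q^*_{\max}$. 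Note finally that the strictness of the required inequality comes from this threshold event at round $k^*$; an averaging argument over $N'$ would only deliver ``$\ge$'' for some voter, not the strict ``$>$'' that \ejrxsat{} demands.
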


\begin{proof}
The proof is similar to the proof of Theorem~\ref{Rule-X-EJR-1}.
As in that proof, let $( A, P, c, b)$ be an ABB instance.
Let $W = \{p_1, \dots, p_k\}$ be the outcome output by \rx{$\sat$}
on this instance where $p_1$ was selected first, $p_2$ second etc.
For any $1 \leq j \leq k$, set $W_j \coloneqq \{p_1, \ldots, p_j\}$.
Consider $N' \subseteq N$, a $T$-cohesive group, for some $T \subseteq P$.
We show that $W$ satisfies \ejrxsat for $N'$.
If $T \subseteq W$ then \ejrxsat is satisfied by definition.
We will thus assume that $T \not \subseteq W$.

In contrast to the proof of Theorem~\ref{Rule-X-EJR-1} we assume this time that
$p^* = \min\{c(p)\mid p \in T \setminus W\}$ is the cheapest project in $T\setminus W$.
Let $k^*$ be the first round, after which there exists
a voter $i^* \in N'$ whose load is larger than $\frac{b}{n} - \frac{1}{c(p^*)}$.
Such a round must exist as otherwise \rx{$\sat$} would not have terminated as 
the voters in $N'$ could still have afforded~$p^*$.
Let $W^* = W_{k^*}$.
Our goal is to prove that there is a voter $i^*$ such that
\begin{align}
	\sat_{i^*}(W^*\cup \{p^*\}) > \sat_{i^*}(T).\label{RuleXEJRxMagUtil:mainclaim1}
\end{align}
Because $\sat$ is additive, the first condition of \dns
implies that if this holds for $p^*$ it must hold for all $p \in T \setminus W$.
Therefore, as $S$ and $T$ were chosen arbitrarily, proving \eqref{RuleXEJRxMagUtil:mainclaim1}
suffices to prove the theorem.

We observe that the derivation of equation \eqref{eq:RuleXEJR1_mainClaim2}
from \eqref{eq:RuleXEJR1_mainClaim} in the proof of Theorem~\ref{Rule-X-EJR-1}
did not depend on the specific choice of $p^*$. Therefore, by the same arguments,
we can prove \eqref{RuleXEJRxMagUtil:mainclaim1} by showing the following:

\begin{equation}
q_{\min} \cdot \sum_{p \in W^* \setminus T} \gamma_{i^*}(p) > q^*_{\max} \cdot \frac{c(T \setminus (W^* \cup \{p^*\}))}{|N'|}.
\label{RuleXEJRxMagUtil:mainclaim2}
\end{equation}
First, we observe that 
$\sum_{p \in W^* \setminus T} \gamma_{i^*}(p) >\frac{c(T \setminus (W^* \cup \{p^*\}))}{|N'|}$
follows by the same argumentation as in the previous proof.
To show that $q_{\min} \geq q^*_{\max}$ we first observe that $q_{min} \geq q^*(p^*)$ holds, because
in every round up to $k^*$ the voters in $N'$ could have paid for $p^*$ on their own,
yet $p^*$ was not selected. Next, we claim that for all $p \in T \setminus W^*$ we have
$q^*(p) \leq q^*(p^*)$. First, observe that by definition
\[q^*(p) = \frac{\sat(p)}{\frac{c(p)}{|N'|}} = |N'|\frac{\sat(p)}{c(p)}.\]
Now, by the choice of $p^*$ we know that $c(p^*) \leq c(p)$ for all $p \in T \setminus W^*$.
Therefore, as $\sat$ is a \dns function,
we have for all $p \in T \setminus W^*$:
\[|N'|\frac{\sat(p^*)}{c(p^*)} \geq |N'|\frac{\sat(p)}{c(p)}.\]

This proves the claim that $q^*(p) \leq q^*(p^*)$ for all $p \in T \setminus W^*$.
From this, we can conclude that $q^*(p^*) = q^*_{\max}$, which means that we have 
$q_{\min} \geq q^*(p^*) = q^*_{\max}$. This concludes the proof of \eqref{RuleXEJRxMagUtil:mainclaim2} and hence the theorem
\end{proof}

This result shows that \rx{$\sat$} is proportional in a strong sense. However, it
also has a big downside: Theorem~\ref{Rule-X-EJR-x-margUtil} only provides a proportionality 
guarantee for \rx{$\sat$} for the specific satisfaction function $\sat$ by which the rule is parameterized.
This means that we have to know which satisfaction function best models the voters when deciding which voting rule to use. It turns out that this is unavoidable, because for two different satisfaction functions, the sets of outcomes providing EJR-x can be non-intersecting. In fact, this even holds for EJR-1. 

\begin{proposition}\label{thm:EJR_Incompatible}
There is an ABB instance for which no outcome satisfies \ejro{$\satCost$} and \ejro{$\satNum$} simultaneously.
\label{thrm:ejr-imposs}
\end{proposition}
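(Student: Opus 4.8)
The plan is to exhibit a single small ABB instance and argue by exhaustive case analysis that every feasible exhaustive outcome fails \ejro{$\satCost$} or \ejro{$\satNum$}. The key is to engineer a conflict between the two satisfaction measures: cost-based satisfaction pushes toward selecting one expensive project for a cohesive group, while cardinality-based satisfaction pushes toward selecting several cheap projects for a (possibly different) cohesive group, and with a tight budget these two demands cannot be met at once.

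Concretely, I would look for an instance with two groups of voters, each large enough relative to $n$ to be cohesive for a suitable $T$. One group would co-approve an expensive project (or a costly set $T$), making it $T$-cohesive in a way that only \ejro{$\satCost$} can detect: to satisfy \ejro{$\satCost$} we need some voter in that group to reach (up to one project) the \emph{cost} $c(T)$, which forces the outcome to spend heavily on projects that group approves. A second group would co-approve many cheap projects, making it cohesive for a large-cardinality $T'$; to satisfy \ejro{$\satNum$} we would need some voter there to match $|T'|$ (up to one project), forcing the outcome to contain many cheap projects. The budget would be set so that buying the expensive project leaves too little room for enough cheap projects, and vice versa. The single-voter/one-project slack in the ``up to one'' relaxation must be accounted for carefully, so the numerical gaps should be chosen with a comfortable margin rather than being exactly tight.

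First I would fix the cost function, budget $b$, the number of voters $n$, and the approval ballots so that the cohesiveness inequality $c(T)\le \tfrac{|N'|}{n}b$ holds for both intended witnessing sets $T$ and $T'$; this is the design step where the instance is actually built. Then I would enumerate the exhaustive outcomes (or at least partition them by how budget is allocated between the expensive and cheap projects) and, for each, point to one of the two cohesive groups whose \ejro{} requirement is violated even after adding the best available single project. Since \ejro{} is the weaker notion, refuting it for \emph{every} outcome immediately also refutes the simultaneous achievability of \ejro{$\satCost$} and \ejro{$\satNum$}, and (a fortiori, via the implication \ejrxsat $\Rightarrow$ \ejrosat noted earlier) the analogous EJR-x incompatibility.

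The main obstacle will be getting the numbers right so that the ``up to one project'' slack cannot be exploited to rescue an outcome: because \ejro{} only asks that adding a \emph{single} further project push a voter over the threshold, the expensive project or the cheapest unchosen project can each contribute a large jump, so the budget and cost gaps must be tuned so that no single addition simultaneously closes the cost gap for one group and the cardinality gap for the other. I expect to need the expensive project to be much larger than all cheap projects combined that fit in the leftover budget, and the number of cheap projects required by the cardinality group to strictly exceed what remains after funding the cost group even allowing one extra project. Verifying that no clever mixed outcome (partially funding both demands) escapes the argument is the delicate part and will likely drive the exact choice of parameters.
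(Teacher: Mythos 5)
Your plan is on the right track and is, in outline, the same approach the paper takes (a tight budget forcing a conflict between one group that is cohesive over expensive projects and groups cohesive over many cheap projects). But as written it is a genuine gap, not a proof: the statement is an existence claim, so it requires an explicit witness instance together with the verification that \emph{every} outcome fails one of the two axioms, and you defer exactly this step (``this is the design step where the instance is actually built,'' ``will likely drive the exact choice of parameters''). Nothing in your text certifies that parameters with the required properties exist, and the part you yourself identify as delicate---ruling out mixed outcomes that partially fund both demands, despite the one-project slack---is never carried out. A second, more substantive point: your heuristic that ``the numerical gaps should be chosen with a comfortable margin rather than being exactly tight'' points away from the construction that actually works. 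The one-project slack is most easily neutralized by \emph{exact} tightness combined with the strict inequality in the definition of \ejrosat: if adding the best unchosen project brings a voter's satisfaction to exactly $\sat_i(T)$, the requirement $\sat_i(W \cup \{p\}) > \sat_i(T)$ still fails.

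For comparison, the paper's instance realizes your plan with two voters and budget $10$: projects $p_1,p_2$ of cost $5$ approved by both voters, plus ten unit-cost projects of which voter $1$ approves $p_3,\dots,p_7$ and voter $2$ approves $p_8,\dots,p_{12}$. Each voter alone is cohesive over their five cheap projects ($5 \le \tfrac{1}{2}\cdot 10$), and since adding a single project raises $\satNum$-satisfaction by at most $1$, any outcome containing $p_1$ or $p_2$ leaves some voter with at most $3$ approved projects and hence violates \ejro{$\satNum$}; so $W=\{p_3,\dots,p_{12}\}$ is the unique outcome satisfying \ejro{$\satNum$}. But the two voters jointly are $\{p_1,p_2\}$-cohesive with $\satCost$-value $10$, while under $W$ each voter has $\satCost$-satisfaction $5$, and adding one unchosen project (cost $5$) yields exactly $10$, not strictly more---the tightness just described. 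Note also that the groups witnessing the two violations overlap (each voter belongs to both a ``cheap'' group and the ``expensive'' group), rather than being two disjoint voter groups as your sketch suggests; with only two satisfaction-side constraints per voter this keeps the case analysis trivial, whereas your disjoint-groups design would need a larger electorate and more bookkeeping to pin down cohesiveness.
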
 
\begin{proof}
Consider the following example with two voters and projects $p_1, \dots, p_{12}$ with $c(p_1)  = c(p_2) = 5$ and the other projects costing $1$. Voter $1$ approves $\{p_1, \dots, p_7\}$ and voter~$2$ approves $\{p_1, p_2, p_8 ,\dots, p_{12}\}$. We set the budget to be $10$. For $\satNum$, we observe that each voter on their own is cohesive over the set of $5$ projects they approve individually (i.e., voter 1 is $\{p_3, \dots, p_7\}$-cohesive and voter 2 is $\{p_8, \dots, p_{12}\}$-cohesive). If either $p_1$ or $p_2$ is included in the outcome, at least one voter has a satisfaction of at most $3$ under $\satNum$; such an outcome can not satisfy \ejro{$\satNum$}. Thus, $W=\{p_3, \dots, p_{12}\}$ is the only outcome satisfying \mbox{\ejro{$\satNum$}}. On the other hand, since both voters together are $\{p_1, p_2\}$-cohesive, the outcome $W$ does not satisfy \mbox{\ejro{$\satCost$}}. Thus, no outcome satisfies both \ejro{$\satCost$} and \ejro{$\satNum$} in this instance.
\end{proof}

\Cref{thrm:ejr-imposs} shows that if we want to achieve strong proportionality
guarantees, we need to know the satisfaction function. Since this might be
unrealistic in practice, in the next chapter we focus on a weaker notion of proportionality.

\section{Proportional Justified Representation}\label{sec:pjr}

In this section, we consider proportionality axioms based on proportional justified representation (PJR).
As our main result in this section,
we show that there exist rules which simultaneously satisfy PJR-x for all \dns functions.
This establishes a counterpoint to our result for EJR at the end of the previous section (\Cref{thrm:ejr-imposs}).

\subsection{Variants of PJR}

PJR is a weakening of EJR. Instead of requiring that, for every cohesive group, there exists a single voter in the group who is sufficiently satisfied, PJR considers the satisfaction generated by the set of all projects that are approved by \textit{some} voter in the group.

    \begin{definition} \label{var:representation:definition:BPJR-L}
    Given an ABB instance $( A, P, c, b )$,
    an outcome $W \subseteq P$ satisfies \emph{PJR with respect to a satisfaction function $\sat$ (\pjrsat)} 
    if and only if for any $T$-cohesive group $N'$ it holds that
        \( \sat((W \cap \bigcup_{i \in N'} A_i)) \ge \sat(T) \text. \)
    \end{definition}

For $\sat=\satCost$, \pjr{$\sat$}  was considered by \citet{aziz:proportionality}, who called it BPJR-L.
For $\sat=\satNum$, \pjr{$\sat$}  was considered by \citet{los:proportional}.

It is straightforward to see that \ejrsat implies \pjrsat. Hence, from \Cref{var:control:EJR:EJR-s-and-BPJR-s-satisfiable} it follows directly that \pjrsat is also always satisfiable.	

    \begin{corollary}\label{var:control:PJR:PJR-s-and-BPJR-s-satisfiable}
         \pjrsat is always satisfiable for any satisfaction function $\sat$.
    \end{corollary}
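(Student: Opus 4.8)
The plan is to derive this as an immediate corollary of \Cref{var:control:EJR:EJR-s-and-BPJR-s-satisfiable} by exhibiting a single logical implication: \ejrsat implies \pjrsat for every satisfaction function $\sat$. Since \Cref{var:control:EJR:EJR-s-and-BPJR-s-satisfiable} guarantees that \ejrsat is always satisfiable (the Greedy Cohesive Rule produces such an outcome), any outcome witnessing \ejrsat will automatically witness \pjrsat once the implication is established. So the entire content of the corollary reduces to this one implication, which the text already asserts is ``straightforward to see.''

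To prove the implication, I would fix an arbitrary outcome $W$ satisfying \ejrsat and an arbitrary $T$-cohesive group $N'$. By the definition of \ejrsat, there exists some voter $i \in N'$ with $\sat_i(W) \ge \sat_i(T)$, i.e. $\sat(A_i \cap W) \ge \sat(T)$, using that $T \subseteq \bigcap_{j \in N'} A_j \subseteq A_i$ so that $\sat_i(T) = \sat(A_i \cap T) = \sat(T)$. The key step is then the inclusion $A_i \cap W \subseteq (\bigcup_{j \in N'} A_j) \cap W$, which holds because $i \in N'$. Applying monotonicity of $\sat$ (the defining property from \Cref{def:sat-fct} that $\sat(X) \le \sat(X')$ whenever $X \subseteq X'$) yields
\[
\sat\!\left(\left(W \cap \bigcup_{j \in N'} A_j\right)\right) \ge \sat(A_i \cap W) \ge \sat(T),
\]
which is exactly the \pjrsat condition for the group $N'$. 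Since $N'$ was arbitrary, $W$ satisfies \pjrsat.

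I do not expect any real obstacle here: the argument uses only monotonicity of $\sat$ (no additivity, strict increasingness, or cost-neutrality is required), so it holds for every satisfaction function, matching the generality of \Cref{var:control:EJR:EJR-s-and-BPJR-s-satisfiable}. The only point demanding a small amount of care is the identity $\sat_i(T) = \sat(T)$, which relies on $T$-cohesiveness forcing $T \subseteq A_i$; this is precisely why the \ejrsat guarantee translates cleanly into a lower bound of $\sat(T)$ rather than something weaker. Having established that \ejrsat implies \pjrsat, the corollary follows directly by combining this implication with \Cref{var:control:EJR:EJR-s-and-BPJR-s-satisfiable}.
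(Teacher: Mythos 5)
Your proposal is correct and follows exactly the paper's route: the paper derives the corollary from \Cref{var:control:EJR:EJR-s-and-BPJR-s-satisfiable} via the observation that \ejrsat implies \pjrsat, which it declares ``straightforward to see'' without spelling out. Your argument simply fills in that detail, correctly using $T \subseteq A_i$ (from cohesiveness) and monotonicity of $\sat$, and rightly notes that no assumption beyond monotonicity is needed, matching the full generality of the corollary.
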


Since \ejrsat and \pjrsat coincide if there is only one voter,
the hardness proof for \ejrsat (\Cref{thm:NP-EJR}) directly applies to \pjrsat.

\begin{corollary}
Let $\sat$ be a satisfaction function
that is strictly cost-responsive for instances with a single voter.
Then, there is no polynomial-time algorithm that, given an ABB 
instance $( A, P, c, b )$ as input, always computes an outcome
satisfying \pjrsat, unless $P = NP$.
\end{corollary}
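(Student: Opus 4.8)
The plan is to derive the corollary directly from \Cref{thm:NP-EJR}, exploiting the fact that on single-voter instances the axioms \ejrsat and \pjrsat coincide, together with the fact that the reduction behind \Cref{thm:NP-EJR} already produces instances of exactly this form. So the whole argument reduces to (a) checking the coincidence of the two notions when $n=1$, and (b) invoking the existing hardness reduction verbatim.

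First I would verify the coincidence. Fix an instance with a single voter, so $N = \{1\}$. The only nonempty candidate group is $N' = \{1\}$, for which $\bigcup_{i \in N'} A_i = A_1$. Hence \pjrsat requires, for every $T$ such that $\{1\}$ is $T$-cohesive, that $\sat(W \cap A_1) \ge \sat(T)$; and \ejrsat requires that some $i \in N'$ — necessarily voter $1$ — satisfies $\sat_1(W) = \sat(A_1 \cap W) \ge \sat_1(T) = \sat(A_1 \cap T)$. Now, $T$-cohesiveness forces $T \subseteq \bigcap_{i \in N'} A_i = A_1$, so $A_1 \cap T = T$ and $\sat_1(T) = \sat(T)$. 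Both axioms therefore collapse to the identical inequality $\sat(A_1 \cap W) \ge \sat(T)$, and an outcome satisfies \pjrsat if and only if it satisfies \ejrsat on any single-voter instance.

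With this in hand, the hardness transfers immediately. The instance that the proof of \Cref{thm:NP-EJR} builds from a \textsc{Subset-Sum} instance $(S,t)$ has a single voter approving all projects, so by the previous paragraph \pjrsat and \ejrsat are equivalent on it. Thus a polynomial-time algorithm producing \pjrsat-outcomes would, run on that very instance, also produce an \ejrsat-outcome, and the same argument used in \Cref{thm:NP-EJR} (strict cost-responsiveness forces the returned outcome to have cost exactly $b$ precisely when $(S,t)$ is a yes-instance) would decide \textsc{Subset-Sum} in polynomial time, yielding $P = NP$.

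I do not expect a genuine obstacle. The single point that deserves care — and the reason this is a clean equivalence rather than a one-directional implication — is the identity $\sat_1(T) = \sat(T)$, which hinges on $T$-cohesiveness forcing $T \subseteq A_1$; without this the two conditions would only formally resemble one another. Everything past that is a direct appeal to the reduction already established in \Cref{thm:NP-EJR}.
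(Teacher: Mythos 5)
Your proposal is correct and follows exactly the paper's intended argument: the paper justifies this corollary by the single observation that \ejrsat and \pjrsat coincide on single-voter instances, after which the reduction from \Cref{thm:NP-EJR} (whose constructed instances have one voter) applies verbatim. Your explicit verification that $T$-cohesiveness forces $T \subseteq A_1$, making both axioms collapse to $\sat(A_1 \cap W) \ge \sat(T)$, is precisely the detail the paper leaves implicit.
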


The hardness result above (for $\sat=\satCost$) motivated \citet{aziz:proportionality} to define
a relaxation of \pjrsat (for $\sat=\satCost$) they call ``Local-BPJR''.
We discuss this relaxation in 
the full version of this paper,
where we show that it does not imply PJR under the unit-cost assumption. 
\citet{aziz:proportionality} show that their property is satisfied by a polynomial-time computable generalization of the maximin support method \citep{SFFB18a}. 
Instead of Local-BPJR, we consider a stronger property that is similar to \ejrxsat.

    \begin{definition}\label{var:representation:definition:LocalPJR}
        Given an ABB instance $( A, P, c, b )$, an outcome $W$ satisfies \emph{PJR up to any project w.r.t.\ $\sat$} (\pjrxsat) 
        if and only if for any $T$-cohesive group $N'$ and any $p \in T \setminus W$ %
        it holds that
        \( \sat((W \cap \bigcup_{i \in N'} A_i)  \cup \{p\}) > \sat(T) \text. \)
    \end{definition}

Let us consider the relationships between \pjrsat, \pjrxsat and the EJR-based 
fairness notions that we introduced.
By definition, \pjrxsat is implied by \ejrxsat for all satisfaction functions.
One would additionally assume that \pjrxsat is implied by \pjrsat.
Like in the analogous statement for EJR (\Cref{var:representation:proposition:EJR-implies-EJR-x}),
we show this for strictly increasing satisfaction functions.

\begin{restatable}{proposition}{proppjr} \label{var:representation:proposition:PJR-implies-Local-PJR}
Let $\sat$ be a strictly increasing satisfaction function. Then,
\begin{enumerate}
    \item[(i)] \pjrsat implies \pjrxsat, and 
    \item[(ii)] for unit-cost instances if $\sat$ is cost-neutral, \pjrxsat is equivalent to \pjrsat.
\end{enumerate}
\end{restatable}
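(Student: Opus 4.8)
The plan is to follow the template of the EJR analogue, \Cref{var:representation:proposition:EJR-implies-EJR-x}, replacing the single-voter quantifier of EJR throughout by the union $\bigcup_{i\in N'}A_i$ that appears in the PJR definitions. Part~\textit{(i)} is a pure monotonicity argument and part~\textit{(ii)} is the same cardinality-based chain of equivalences, specialized to the union.

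For part~\textit{(i)}, I would assume $W$ satisfies \pjrsat, fix a $T$-cohesive group $N'$ and a project $p\in T\setminus W$. The key observation is that $p\in T\subseteq\bigcap_{i\in N'}A_i\subseteq\bigcup_{i\in N'}A_i$ while $p\notin W$, so $p\notin W\cap\bigcup_{i\in N'}A_i$. Thus adding $p$ strictly enlarges $W\cap\bigcup_{i\in N'}A_i$, and strict increasingness upgrades the weak PJR inequality to the strict one required by \pjrxsat:
\[ \sat\Bigl(\bigl(W\cap\bigcup_{i\in N'}A_i\bigr)\cup\{p\}\Bigr) > \sat\Bigl(W\cap\bigcup_{i\in N'}A_i\Bigr) \ge \sat(T), \]
where the last inequality is exactly \pjrsat. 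Since $N'$ and $p$ were arbitrary, \pjrxsat follows.

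For part~\textit{(ii)}, one direction is immediate from \textit{(i)} (strict increasingness is assumed, so \pjrsat implies \pjrxsat). For the converse I would assume \pjrxsat and take a $T$-cohesive group $N'$. If $T\subseteq W$, then $T\subseteq W\cap\bigcup_{i\in N'}A_i$ and monotonicity alone gives $\sat(W\cap\bigcup_{i\in N'}A_i)\ge\sat(T)$, i.e.\ \pjrsat. Otherwise I pick any $p\in T\setminus W$ and invoke \Cref{lem:1cost} together with the integrality of cardinalities, reusing the chain of \Cref{var:representation:proposition:EJR-implies-EJR-x}\,\textit{(ii)}: the target $\sat(W\cap\bigcup_{i\in N'}A_i)\ge\sat(T)$ is equivalent to $|W\cap\bigcup_{i\in N'}A_i|\ge|T|$, which (as the cardinalities are integers and $p\notin W\cap\bigcup_{i\in N'}A_i$) is equivalent to $|(W\cap\bigcup_{i\in N'}A_i)\cup\{p\}|>|T|$, which by the lemma is equivalent to $\sat((W\cap\bigcup_{i\in N'}A_i)\cup\{p\})>\sat(T)$ --- and this last statement is precisely what \pjrxsat supplies for the chosen $p$. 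Hence \pjrsat holds.

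This proposition is essentially routine, so I do not expect a real obstacle; the only subtleties to get right are \textit{(a)} verifying $p\notin W\cap\bigcup_{i\in N'}A_i$, so that the added project genuinely increases the set (and increments its cardinality by exactly one), and \textit{(b)} the boundary case $T\subseteq W$ in part~\textit{(ii)}, where \pjrxsat is vacuously true and \pjrsat must instead be recovered directly from monotonicity. As in the EJR case, cost-neutrality is needed only for part~\textit{(ii)}: it is the lemma-based chain that lets cardinalities stand in for satisfaction values and thereby closes the converse direction.
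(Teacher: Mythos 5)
Your proof is correct and takes essentially the same route as the paper's: part \textit{(i)} is the same strict-increasingness upgrade of the weak \pjrsat inequality, and part \textit{(ii)} reproduces the paper's chain of equivalences via \Cref{lem:1cost}, with the $T \subseteq W$ case handled separately. Your explicit verifications that $p \notin W \cap \bigcup_{i \in N'} A_i$ (so the cardinality increments by exactly one) and that integrality justifies the step $|W \cap \bigcup_{i \in N'} A_i| + 1 > |T| \Leftrightarrow |W \cap \bigcup_{i \in N'} A_i| \ge |T|$ are slightly more careful than the paper's write-up, but the argument is the same.
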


\begin{proof}
For \textit{(i)}, assume that $W$ satisfies \pjrsat. This means that for any $T$-cohesive group $N'$, we have
           \[ \sat(\bigcup_{i \in N'} A_i \cap W) \geq \sat(T). \]
If $T \subseteq W$, then the requirement of \pjrxsat is satisfied. If not, let $p \in T \setminus W$. By strict increasingness, we have
            \[ \sat(\bigcup_{i \in N'} A_i \cap W \cup \{p\}) > \sat(\bigcup_{i \in N'} A_i \cap W) \geq \sat(T), \]
and thus \pjrxsat holds.

For \textit{(ii)}, we need to show that \pjrxsat implies \pjrsat. 
Consider an outcome $W\subseteq P$ satisfying \pjrxsat and a $T$-cohesive group $N'$. If $T \subseteq W$, then $\sat((W \cap \bigcup_{i \in N'} A_i)) \ge \sat(T)$ trivially holds. Therefore, assume that $T \setminus W \neq \emptyset$ and consider $p \in T \setminus W$.
Using \Cref{lem:1cost}, we get
\begin{align*}
    &\sat((W \cap \bigcup_{i \in N'} A_i) \cup \{p\}) >\sat(T) \\
    &\Leftrightarrow |(W \cap \bigcup_{i \in N'} A_i) \cup \{p\}| > |T| \\
    &\Leftrightarrow |W \cap \bigcup_{i \in N'} A_i| + 1 > |T| \\
    &\Leftrightarrow |W \cap \bigcup_{i \in N'} A_i| \ge |T| \\
    &\Leftrightarrow \sat(W \cap \bigcup_{i \in N'} A_i) \ge \sat(T).
\end{align*}
Since $N'$ was chosen arbitrarily, this implies that $W$ satisfies \pjrsat. 
\end{proof}

For unit-cost instances and cost-neutral and strictly increasing satisfaction functions, the second part of \Cref{var:representation:proposition:PJR-implies-Local-PJR} implies that \pjrxsat is equivalent to the original definition of PJR \citep{fernandez:proportionalJustified}.\footnote{According to this definition, an outcome $W$ satisfies EJR if $|W \cap \bigcup_{i \in N'} A_i| \ge \ell$ for every $\ell$-cohesive group $N'$.} 
Under these conditions, the equivalence also holds for the following weakening of \pjrxsat, which was considered by \citet{los:proportional} for $\sat=\satNum$. 
\label{app:pjr1}

    \begin{definition}\label{var:representation:definition:PJR-1}
        An outcome $W$ satisfies Proportional Justified Representation up to one project (\pjrosat) with respect to an ABB instance $( A, P, c, b )$ if and only if for any $T$-cohesive group $S$ either $T \subseteq W$ or there exists a $p \in \bigcap_{v \in S} A_v \setminus W$ such that
        \[ \sat((W \cap \bigcup_{v \in S} A_v)  \cup \{p\}) > \sat(T) \]

    \end{definition}

We say more about \pjrosat in \Cref{app:bpjr}.

Next, we consider the relationship between \pjrxsat and
\ejrosat. Of course, EJR is generally a stronger axiom that PJR.
However, ``up to one project'' is a greater weakening than ``up to any project''
and, indeed, we find that \ejrosat does not imply \pjrxsat in general.
We keep the following example fairly general to show that 
\ejrosat does not imply \pjrxsat 
for a large class of satisfaction functions.

\begin{example}
Consider a strictly increasing satisfaction function $\sat$
and an ABB instance $( A, P, c, b)$ with $|P| \geq 3$, and one voter $1$ who approves all projects in $P$. Moreover, assume that there is a project $p_1 \in P$ for which  
\[  c(P \setminus \{p_1\}) \leq c(p_1) \quad \text{and} \quad \sat(P \setminus \{p_1\}) = \sat(p_1).\]
Finally, let $b = c(p_1)$. For example, for $\sat \in \{\satCost, \sat^{\text{share}}\}$, we can use 
any example for which $c(P \setminus \{p_1\}) = c(p_1)$.

Let $p_2 \in P$ with $p_2 \neq p_1$ and $P^* = P \setminus \{p_1,p_2\}$.
Since $|P| \geq 3$ we have that $P^* \neq \emptyset$.
We claim that $P^*$ satisfies \ejrosat but not \pjrxsat. Let us first consider \ejrosat:
We observe that $\{1\}$ is $\{p_1\}$-cohesive and $\{p_1\}$ is an affordable outcome
from which $1$ derives maximal satisfaction. 
Moreover, as $\sat$ is a satisfaction function and because $P^* \neq \emptyset$,
we know that $\sat(W) > 0$. Since $\sat$ is strictly increasing, this implies
$\sat(p_1) < \sat(P^* \cup \{p_1\})$.
Hence, $P^*$ satisfies \ejrosat.
        
On the other hand, since $1$ derives the same satisfaction
from the outcomes $\{p_1\}$ and $P \setminus \{p_1\}$, we know that $P \setminus \{p_1\}$ is
also an outcome from which the voter derives maximal satisfaction.
By definition, $P \setminus \{p_1\}$ is a proper superset of $P^*$.
Moreover, by assumption $P \subset \{p_1\}$ is within the budget limit.
This means that $P^*$ violates \pjrxsat. 
\end{example}

\subsection{Achieving PJR-x for All \dns Functions}

Next we turn to our main result on PJR. We give a family of voting rules, all of which simultaneously satisfy \pjrxsat for all \dns functions $\sat$.
To define these voting rules, we recall the definition of priceability, which has been introduced in multiwinner voting by \citet{PeSk20a} and
extended to the PB setting by \citet{PPS21a} and \citet{los:proportional}.

\newcommand{\unspent}[1]{B^*_{#1}}
 \begin{definition}[Priceability]
    An outcome $W$ satisfies priceability with respect to an ABB instance $( A, P, c, b )$ if and only if there is a budget $B >0$ and a collection $d = (d_i)_{i \in N}$ of payment functions $d_i \colon P \to [0,\frac{B}{n}]$ such that\footnote{The numbering of constraints follows \citet{peters:market}.}
    \begin{enumerate}[left=0.5em]
    \setlength\itemsep{0.5em}
        \item[\textbf{C1}] If $d_i(p_j) > 0$ then $p_j \in A_i$ for all $p_j \in P$ and $i \in N$
        \item[\textbf{C2}] If $d_i(p_j) > 0$ then $p_j \in W$ for all $p_j \in P$ and $i \in N$
        \item[\textbf{C3}] $\sum_{p_j \in P} d_i(p_j) \le \frac{B}{n}$ for all $i \in N$
        \item[\textbf{C4}] $\sum_{i \in N} d_i(p_j) = c(p_j) $ for all $p_j \in W$
        \item[\textbf{C5}] $\sum_{i \in N_j} \unspent{i} \le c(p_j)$ for all $p_j \notin W$,
        where $\unspent{i}$ is the unspent budget of voter $i$, i.e.,   $\unspent{i} = \frac{B}{n} - \sum_{p_k \in P}d_i(p_k)$.

    \end{enumerate}
    The pair $\{B,d\}$ is called a \emph{price system} for $W$. 
    \end{definition}
    For unit-cost instances, every exhaustive, priceable outcome satisfies PJR \citep{PeSk20a}. For $\satCost$, we show something similar in the approval-based PB setting. %
    
    \begin{restatable}{theorem}{satcostprice}\label{thm:exhaustive}
    Let $W$ be an outcome such that there is a price system $\{B,d\}$ with $B > b$.
    Then $W$ fulfills %
    \pjrx{$\satCost$}.
    \end{restatable}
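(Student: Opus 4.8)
The plan is to work directly with the cost interpretation $\satCost(X) = c(X)$ and use the price system to control how much budget the voters of a cohesive group can have left unspent. Fix a $T$-cohesive group $N'$ and a project $p \in T \setminus W$, and abbreviate $U \coloneqq W \cap \bigcup_{i \in N'} A_i$. Since $p \notin W \supseteq U$, we have $p \notin U$, so $\satCost((W \cap \bigcup_{i \in N'} A_i) \cup \{p\}) = c(U) + c(p)$, and the goal reduces to proving the strict inequality $c(U) + c(p) > c(T)$, which is exactly what \pjrx{$\satCost$} demands.

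The first step is to bound the total spending of the group $N'$. By \textbf{C1} and \textbf{C2}, each voter $i \in N'$ pays only for projects in $W \cap A_i \subseteq U$, so every payment made by an $N'$-voter is directed at a project in $U$. Summing \textbf{C4} over the projects of $U$ therefore shows that the aggregate spending of $N'$ is at most $c(U)$, whence the aggregate unspent budget satisfies $\sum_{i \in N'} \unspent{i} \ge |N'|\tfrac{B}{n} - c(U)$.

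The second step applies the ``no-leftover-coalition'' constraint \textbf{C5} to the unchosen project $p$. Because $p \in T \subseteq \bigcap_{i \in N'} A_i$, every voter of $N'$ approves $p$, i.e.\ $N' \subseteq N_p$; hence \textbf{C5} yields $\sum_{i \in N'} \unspent{i} \le \sum_{i \in N_p} \unspent{i} \le c(p)$. Chaining this with the lower bound from the previous paragraph gives $|N'|\tfrac{B}{n} \le c(U) + c(p)$.

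Finally I would invoke the hypotheses: cohesiveness gives $c(T) \le \tfrac{|N'|}{n} b$, and since $B > b$ strictly we obtain $c(U) + c(p) \ge |N'|\tfrac{B}{n} > |N'|\tfrac{b}{n} \ge c(T)$, which completes the proof. The only place any care is needed is the first step---correctly observing that all $N'$-payments land inside $U$ and therefore total at most $c(U)$; everything else is a direct chaining of \textbf{C5}, cohesiveness, and the strict slack $B > b$, and it is precisely this strictness that upgrades the plain PJR bound to the strict ``up to any project'' version.
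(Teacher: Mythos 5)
Your proof is correct and takes essentially the same route as the paper's: you bound the total spending of the group $N'$ by $\satCost(W \cap \bigcup_{i \in N'} A_i)$ using \textbf{C1}, \textbf{C2} and \textbf{C4}, apply \textbf{C5} to the unchosen project $p$ via $N' \subseteq N_p$, and close the argument by chaining cohesiveness with the strict slack $B > b$. The only difference is presentational---you argue directly for the strict inequality $c(U) + c(p) > c(T)$, whereas the paper assumes a violation of \pjrx{$\satCost$} and derives a contradiction to \textbf{C5} from the identical inequality chain.
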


    \begin{proof}
    Assume that $W$ does not satisfy \pjrx{$\satCost$}. Then there is a $T$-cohesive group of voters $N'$ and a $p \in \bigcap_{i \in N'} A_i \setminus W$ such that
    \begin{align*} c(W \cap \bigcup_{i \in N'} A_i)) + c(p) = \satCost((W \cap \bigcup_{i \in N'} A_i)  \cup \{p\})\\ \le \satCost(T) \le \frac{\lvert N'\rvert b}{n}. \end{align*}
    We know that $B > b$. Therefore, we get that
    \begin{align*}
    \sum_{i \in N_p} \unspent{i} \ge  \sum_{i \in N'} \unspent{i} \ge \frac{\lvert N' \rvert B}{n} - \sum_{i \in N'} \left(\sum_{p' \in P}d_i(c')\right)\\ \ge \frac{\lvert N' \rvert B}{n} - \satCost(W \cap \bigcup_{i \in N'} A_i) \\
    \ge \frac{\lvert N' \rvert B}{n} - \satCost(T) + c(p)\\\ > \frac{\lvert N' \rvert b}{n} - \frac{\lvert N' \rvert b}{n} + c(p) = c(p)
    \end{align*}
    This is a contradiction to axiom \textbf{C5}. Hence, $W$ must satisfy \pjrx{$\satCost$}.
 \end{proof}

However, %
this implication does not hold for other 
satisfaction functions, as the following example illustrates. 
\begin{example}
\label{ex:pricenoimp}
Consider $\satNum$ and an instance with two voters, five projects $p_1, \dots, p_5$, and budget $b = 4$. The voters have the approval sets $A_1 = \{p_1, p_2, p_3\}$ and $A_2 = \{p_1, p_4, p_5\}$. The project $p_1$ costs $4$ while the rest of the projects cost $1$ each. Then the outcome $\{p_1\}$ is priceable with a budget of $B = 4.5 > 4$ (with both voters paying $2$ for $p_1$), but does not satisfy \pjrx{$\satNum$}. 
\end{example}

\noindent
Towards a more broadly applicable variant of \Cref{thm:exhaustive}, we introduce a new constraint for price systems:
    \begin{enumerate}[left=0.5em]
        \item[\textbf{C6}] $\sum_{i \in N_j} d_i(p_k) \le c(p_j)$ for all $p_j \notin W$ and all $p_k \in W$.
    \end{enumerate}
    Intuitively, a violation of this axiom would mean that the approvers of $p_j$ could take their money they spent on $p_k$ and buy $p_j$ instead for a strictly smaller cost.
    If an outcome is priceable with a price system satisfying \textbf{C6}, we say that it is \textbf{C6}-priceable.  For instance, in \Cref{ex:pricenoimp}, the outcome consisting only of $p_1$ is not \textbf{C6}-priceable since at least one voter must spend at least $2$ on $p_1$ which is more than the price of one of $\{p_2, \dots, p_5\}$. %

    Using this definition, we can now show our main result, namely that \textbf{C6}-priceability with $B > b$ is sufficient for satisfying \pjrxsat for all \dns functions $\sat$.
    
    \begin{restatable}{theorem}{exhaustpjrx} \label{thm:exdnspjr}
    Let $W \subseteq P$ be a \textbf{C6}-priceable outcome with price system $\{B, d\}$ such that \mbox{$B > b$}. Then, $W$ satisfies \pjrxsat for all \dns functions~$\sat$.
    \end{restatable}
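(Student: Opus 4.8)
The plan is to prove the contrapositive, mirroring the structure of the proof of \Cref{thm:exhaustive} but using the extra strength provided by constraint \textbf{C6} and by the \dns properties of $\sat$. Suppose $W$ does \emph{not} satisfy \pjrxsat. Then there is a $T$-cohesive group $N'$ and a project $p^* \in T \setminus W$ (so in particular $p^* \in \bigcap_{i \in N'} A_i$) with
\[
\sat\bigl((W \cap \textstyle\bigcup_{i \in N'} A_i) \cup \{p^*\}\bigr) \le \sat(T).
\]
Without loss of generality I would take $p^*$ to be a cheapest project in $T \setminus W$. The goal is to derive a contradiction with \textbf{C5}, i.e.\ to show $\sum_{i \in N_{p^*}} \unspent{i} > c(p^*)$.

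First I would translate the satisfaction inequality into a statement about cost. Writing $V \coloneqq W \cap \bigcup_{i \in N'} A_i$ and using additivity of $\sat$, the assumption says $\sat(V) + \sat(p^*) \le \sat(T)$, and since $T$-cohesiveness gives $c(T) \le \frac{|N'|}{n} b$, I would like to convert $\sat(T)$ into a cost bound. Here is where the first \dns inequality $\sat(p)/c(p) \ge \sat(p')/c(p')$ for $c(p) \le c(p')$ enters: because $p^*$ is the cheapest uncovered project, every project $p \in T \setminus V$ satisfies $c(p) \ge c(p^*)$, so its satisfaction-per-cost ratio is at most that of $p^*$. This lets me bound $\sat(T \setminus V)$ from above by $\frac{\sat(p^*)}{c(p^*)} \cdot c(T \setminus V)$, and hence relate the satisfaction gap $\sat(T) - \sat(V)$ to a cost gap scaled by the ratio $\sat(p^*)/c(p^*)$.

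The key new step is to lower-bound the unspent budget $\sum_{i \in N'} \unspent{i}$. By \textbf{C3} each voter has $\frac{B}{n}$ to spend, so $\sum_{i \in N'} \unspent{i} = \frac{|N'|B}{n} - \sum_{i \in N'} \sum_{p \in P} d_i(p)$. By \textbf{C1} and \textbf{C2}, voters in $N'$ only pay for projects in $W$ they approve, so the total payment is at most the money they put on projects in $V$. Using $B > b$ together with the cohesiveness bound $c(T) \le \frac{|N'|}{n}b < \frac{|N'|}{n}B$, I expect to derive $\sum_{i \in N'} \unspent{i} > c(p^*)$ after accounting for the payments on $V$. The subtle point — and the main obstacle — is that the payments voters make on $V$ are measured in \emph{cost} units, whereas the satisfaction inequality is in \emph{satisfaction} units; the two are reconciled only through the ratio $\sat(p^*)/c(p^*)$. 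This is exactly where \textbf{C6} is indispensable: it ensures that for any single project $p_k \in W$, the approvers of $p^*$ did not collectively overpay relative to $c(p^*)$, so I can control the per-project payments and prevent the payments on $V$ from swamping the budget surplus. Carefully combining the cost-gap bound from the \dns ratio with the \textbf{C6} payment bounds should yield $\sum_{i \in N_{p^*}} \unspent{i} \ge \sum_{i \in N'} \unspent{i} > c(p^*)$, contradicting \textbf{C5} and completing the proof.

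I would organize the inequalities as a single chain analogous to the display in the proof of \Cref{thm:exhaustive}, substituting the \dns-based conversion at the step where cost and satisfaction are exchanged, and invoking \textbf{C6} at the step where individual project payments are bounded. The reason the argument does not go through for arbitrary priceable outcomes (cf.\ \Cref{ex:pricenoimp}) is precisely that without \textbf{C6} a voter may have sunk a large payment into an expensive project in $W$, leaving too little unspent budget; \textbf{C6} rules this out, and the first \dns inequality $\sat(p) \le \sat(p')$ for $c(p) \le c(p')$ then guarantees that establishing the strict inequality for the cheapest $p^*$ suffices to establish it for every $p \in T \setminus W$, matching the universal quantifier in the definition of \pjrxsat.
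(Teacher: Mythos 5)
Your proposal is correct, and it takes a genuinely different route from the paper's proof, so it is worth comparing the two. The paper invokes \textbf{C5} (with $B>b$) \emph{first}, deriving for an \emph{arbitrary} violating $p \in T\setminus W$ the cost inequality $c(p)+\sum_{i\in N'}\sum_{p'\in W'}d_i(p')>c(T)$ where $W'=W\cap\bigcup_{i\in N'}A_i$; it then strips out $T_W=T\cap W'$, passes to satisfaction-per-money ratios, applies the mediant inequality $\min(\frac{a}{c},\frac{b}{d})\le\frac{a+b}{c+d}\le\max(\frac{a}{c},\frac{b}{d})$ to extract extremal projects $p_{\min}\in W'\setminus T_W$ and $t_{\max}\in T\setminus(T_W\cup\{p\})$, and uses the two \dns conditions to force $\sum_{i\in N'}d_i(p_{\min})>c(t_{\max})$, contradicting \textbf{C6} at the very end (this requires a separate check that $T\setminus(T_W\cup\{p\})\neq\emptyset$ so that the ratios are well defined). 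You run the same ingredients in the opposite order: you anchor at the \emph{cheapest} $p^*\in T\setminus W$ (a legitimate reduction, as you note, by additivity and the first \dns condition), use \textbf{C6} with $p_j=p^*$ together with \textbf{C4} to cap each per-project payment of $N'$ by $\min\bigl(c(p'),c(p^*)\bigr)$, and contradict \textbf{C5} last. Your one hedged step ("should yield") does close, exactly with the tools you name: for every $p'\in W'$ one has $\sat(p')\ge\frac{\sat(p^*)}{c(p^*)}\sum_{i\in N'}d_i(p')$ --- via the ratio condition and the \textbf{C4} cap when $c(p')\le c(p^*)$, and via $\sat(p')\ge\sat(p^*)$ plus the \textbf{C6} cap when $c(p')>c(p^*)$ --- while $\sat(t)\le\frac{\sat(p^*)}{c(p^*)}c(t)$ for every $t\in T\setminus(T_W\cup\{p^*\})$; summing both families and cancelling $\frac{\sat(p^*)}{c(p^*)}>0$ converts $\sat(W'\setminus T_W)\le\sat(T\setminus(T_W\cup\{p^*\}))$ into $\sum_{i\in N'}\sum_{p'\in W'}d_i(p')\le c(T)-c(p^*)$, and then $\sum_{i\in N_{p^*}}\unspent{i}\ge\sum_{i\in N'}\unspent{i}=\frac{|N'|B}{n}-\sum_{i\in N'}\sum_{p'\in W'}d_i(p')>\frac{|N'|b}{n}-c(T)+c(p^*)\ge c(p^*)$, violating \textbf{C5}. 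What each approach buys: the paper's version keeps its first half identical to the proof of \Cref{thm:exhaustive} and isolates the new work in a self-contained extremal-ratio comparison; your version avoids the mediant inequality and the nonemptiness edge case entirely (your inequalities are multiplied through by the exchange rate rather than divided, so $T\setminus(T_W\cup\{p^*\})=\emptyset$ is handled automatically), and it is structurally parallel to the paper's EJR-x argument (\Cref{Rule-X-EJR-x-margUtil}), which likewise anchors at the cheapest uncovered project.
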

       \begin{proof}
For the sake of a contradiction, assume that $W$ does not satisfy \pjrxsat . Then there is a $T$-cohesive group of voters $N'$ and some $p \in T \setminus W$ such that
\begin{equation}\label{eq:main}
\sat((W \cap \bigcup_{i \in N'} A_i)  \cup \{p\}) \le \sat(T). 
\end{equation}
For ease of notation, let $W' \coloneqq W \cap \bigcup_{i \in N'} A_i$ 
be the set of projects in $W$ that are approved by at least one voter in $N'$.
Furthermore, we let $N_p$ denote the set of approvers of $p$.

The proof proceeds in two parts. First, we show that if the voters in $N'$
would additionally buy $p$, then they would spend more than $c(T)$.
To prove this, we mainly use the priceability of $W$. 
Second, we show that there is an unchosen project in $T$ which would give the voters in $N'$ a better satisfaction-to-cost ratio. %
For this part, \textbf{C6} will be crucial, as it guarantees that cheaper projects are
bought first; since $\sat$ is a \dns function, this leads to a higher satisfaction per cost. 
Together, these two parts contradict \eqref{eq:main}.

\smallskip

For the first part, we want to show the following claim:
\begin{equation}
  c(p) + \sum_{i \in N'} \sum_{p' \in W'}  d_i(p') > c(T).  \label{eq:claim}
\end{equation}
Since $B > b$, we obtain from $\textbf{C5}$ that
\[ c(p) \ge \sum_{i' \in N'} \frac{B}{n} - \sum_{p' \in P} d_i(p')
= \frac{|N'|B}{n} - \sum_{p' \in W'} \sum_{i \in N'} d_i(p').\] 
Rewriting this inequality gives us
\[
c(p) + \sum_{p' \in W'} \sum_{i \in N'} d_i(p')  \ge \frac{\lvert N'\rvert B}{n} > \frac{\lvert N'\rvert b}{n} \ge c(T).
\]

Having shown \eqref{eq:claim}, we now advance to the second part of the proof. Here we want to compare the satisfaction per unit of money between $W' \cup \{p\}$ and $T$.
Since both the satisfaction function $\sat$ and the cost function $c$ are additive, we can ignore the projects that appear
both in $W' \cup \{p\}$ and $T$ when doing so. Let $T_W = T \cap W'$. Then,
we first observe that \eqref{eq:main} implies by the additivity of $\sat$ that
\begin{equation}\label{eq:sat-diff}
\sat(W'\setminus T_W) \le \sat(T\setminus(T_W\cup \{p\})).
\end{equation}
We apply the same idea to \eqref{eq:claim}. Since for all $p' \in W'$ it holds that $\sum_{i \in N'} d_i(p') \le c(p')$ we get that
\begin{equation}\label{eq:cost-diff}
\sum_{i \in N'} \sum_{p' \in W'\setminus T_W}  d_i(p') > c(T\setminus (T_W \cup \{p\})).
\end{equation}
We now show that $T\setminus (T_W \cup \{p\}) \neq \emptyset$. Assume for contradiction that $T\setminus (T_W \cup \{p\}) = \emptyset$, then
$\sat(T\setminus(T_W\cup \{p\})) =0$.
By \eqref{eq:sat-diff} this implies $\sat(W' \setminus T_W) =0$
and hence $W'\setminus T_W = \emptyset$ 
Then, however, both sides of \eqref{eq:cost-diff} evaluate 
to $0$; a contradiction. Thus, we know that $c(T\setminus (T_W \cup \{p\})) > 0$.

By putting \eqref{eq:sat-diff} and \eqref{eq:cost-diff} together, we get that
\begin{align*}
    \frac{\sat(W' \setminus T_W )}{\sum_{p' \in W'\setminus T_W} \sum_{i \in N'} d_i(p')} < \frac{\sat(T\setminus (T_W \cup \{p\}))}{c(T\setminus (T_W \cup \{p\}))}.
\end{align*}
Since $\sat$ and $c$ are additive, we can rewrite this inequality as
\begin{align*}
   \sum_{p' \in W'\setminus T_W}  \frac{\sat(p')}{\sum_{i \in N'} d_i(p') } 
    <\sum_{t \in T\setminus (T_W \cup \{p\})} \frac{ \sat(t)}{ c(t)}.
\end{align*}
Now we use the fact that $\min(\frac{a}{c}, \frac{b}{d}) \le \frac{a+b}{c+d} \le \max(\frac{a}{c}, \frac{b}{d})$ to obtain the following:
\begin{align*}
    \min_{p' \in W'\setminus T_W}\left\{\frac{\sat(p')}{ \sum_{i \in N'} d_i(p')}\right\} \le
   \sum_{p' \in W'\setminus T_W}\frac{ \sat(p')}{\sum_{i \in N'} d_i(p') } \\< \sum_{t \in T\setminus (T_W \cup \{p\})}\frac{ \sat(t)}{ c(t)} \le \max_{t \in T\setminus (T_W \cup \{p\})}\left\{\frac{\mu(t)}{c(t)}\right\}.
\end{align*}
Let $p_{\min} = \argmin_{p' \in W'\setminus T_W}\left\{\frac{\sat(p')}{\sum_{i \in N'} d_i(p')}\right\}$
and $t_{\max} = \argmax_{t \in T\setminus T_W}\left\{\frac{\mu(t)}{c(t)}\right\}$. 
Then it follows that 
\begin{equation}\label{eq:normUtil}
\frac{\sat(p_{\min})}{c(p_{\min})} \le \frac{\sat(p_{\min})}{\sum_{i \in N'} d_i(p_{\min})} < \frac{\sat(t_{\max})}{c(t_{\max})}\text.
\end{equation}
In other words, $p_{\min}$ has a lower normalized satisfaction than $t_{\max}$.
Since $\sat$ is a \dns function, we can conclude that $c(t_{\max}) \le c(p_{\min})$.
By the first condition of \dns functions, this implies
$\sat(p_{\min}) \ge \sat(t_{\max})$. However, then for the second inequality of
\eqref{eq:normUtil} to hold, we must have $\sum_{i \in N'} d_i(p_{\min}) > c(t_{\max})$,
a contradiction to \textbf{C6}.
\end{proof}

    First, we observe that from the MES family of rules \rx{$\satNum$} satisfies the conditions of the theorem.
    \begin{restatable}{corollary}{rulexpjr}
\rx{$\satNum$} satisfies \pjrxsat for all \dns functions $\sat$.
\end{restatable}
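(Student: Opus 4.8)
The plan is to obtain the corollary from Theorem~\ref{thm:exdnspjr}: it suffices to show that every outcome produced by \rx{$\satNum$} is \textbf{C6}-priceable with a budget $B > b$. I would build the price system directly from the execution of the rule. Fix any run of \rx{$\satNum$} with output $W = \{p_1, \dots, p_k\}$ listed in order of selection, let $\rho_j$ be the minimal parameter for which $p_j$ was affordable when it was chosen, and let $b_i^{(j)}$ be the remaining budget of voter $i$ immediately before that round. Since $\satNum(p_j)=1$, the amount spent by voter $i$ on $p_j$ is exactly $\min(b_i^{(j)},\rho_j)$, so I define $d_i(p_j) := \min(b_i^{(j)},\rho_j)$ for $i \in N_j$ and $d_i(p_j):=0$ otherwise. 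As this construction works for an arbitrary run, it covers every outcome in \rx{$\satNum$}$(E)$.

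Constraints \textbf{C1}--\textbf{C4} are immediate: payments are made only by approvers (\textbf{C1}) and only for selected projects (\textbf{C2}); each voter's total payment never exceeds its initial budget $\frac{b}{n}$, which is at most $\frac{B}{n}$ (\textbf{C3}); and the defining equality of $\rho_j$-affordability gives $\sum_{i \in N} d_i(p_j) = \sum_{i \in N_j}\min(b_i^{(j)},\rho_j) = c(p_j)$ for each $p_j \in W$ (\textbf{C4}). For \textbf{C5} and the requirement $B > b$, I first take $B=b$ and note that at termination no unselected project is affordable; since $\sum_{i \in N_j}\min(b_i,\rho)$ increases continuously in $\rho$ up to $\sum_{i \in N_j} b_i$, non-affordability forces $\sum_{i \in N_j} \unspent{i} = \sum_{i \in N_j} b_i < c(p_j)$ strictly for every $p_j \notin W$. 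Setting $\delta := \min_{p_j \notin W}\big(c(p_j) - \sum_{i \in N_j} \unspent{i}\big) > 0$ (with $\delta$ arbitrary if $W$ is exhaustive), I keep the payments fixed and raise the budget to $B := b + \delta$. This leaves \textbf{C1}--\textbf{C4} untouched, and since each $\unspent{i}$ grows by $\frac{\delta}{n}$, the left-hand side of \textbf{C5} for $p_j$ grows by $\frac{|N_j|}{n}\delta \le \delta$, so \textbf{C5} is preserved while now $B > b$.

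The crucial step is \textbf{C6}, and this is where the identity $\satNum(\cdot)\equiv 1$ is essential. Let $p_j \notin W$ and $p_k \in W$. Because $p_j$ is never selected, it is still available in the round in which $p_k$ is chosen; and since $p_k$ is selected as the project affordable at the minimum parameter, $p_j$ is at that moment affordable only at some $\rho' \ge \rho_k$, or not affordable at all. In the first case the affordability equality reads $\sum_{i \in N_j}\min(b_i^{(k)},\rho') = c(p_j)$, and from $\rho_k \le \rho'$ we get term by term $\min(b_i^{(k)},\rho_k) \le \min(b_i^{(k)},\rho')$; summing over $i \in N_j$ and dropping the nonnegative contributions of $i \in N_j \setminus N_k$ yields $\sum_{i \in N_j} d_i(p_k) = \sum_{i \in N_j \cap N_k}\min(b_i^{(k)},\rho_k) \le c(p_j)$. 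In the second case $\sum_{i \in N_j} b_i^{(k)} < c(p_j)$, and the same bound follows a fortiori. Hence \textbf{C6} holds.

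Combining the three preceding paragraphs, every outcome of \rx{$\satNum$} is \textbf{C6}-priceable with $B > b$, so Theorem~\ref{thm:exdnspjr} applies and yields \pjrxsat for all \dns functions~$\sat$. I expect the verification of \textbf{C6} to be the main obstacle: \textbf{C1}--\textbf{C5} and the budget inflation are routine bookkeeping, whereas \textbf{C6} genuinely exploits that $\satNum(p_j)=\satNum(p_k)=1$ makes the payment caps $\rho_k\satNum(p_k)$ and the affordability caps $\rho'\satNum(p_j)$ directly comparable. For a general satisfaction function these caps scale differently and the argument breaks down, which is consistent with \textbf{C6}-priceability failing for $\satNum$-unrelated rules such as the plain priceable outcome in Example~\ref{ex:pricenoimp}.
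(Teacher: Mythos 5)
Your proof is correct and follows essentially the same route as the paper's: you construct the natural price system from the run of \rx{$\satNum$}, inflate the budget by $\delta = \min_{p_j \notin W}\bigl(c(p_j) - \sum_{i \in N_j} \unspent{i}\bigr) > 0$ to obtain $B > b$ while preserving \textbf{C1}--\textbf{C5}, verify \textbf{C6} by comparing affordability parameters in the round where $p_k$ is selected (using $\satNum(p)=1$ to make the caps comparable), and then invoke Theorem~\ref{thm:exdnspjr}. Your \textbf{C6} argument is in fact a slightly cleaner, direct rendering (termwise monotonicity of $\rho \mapsto \min(b_i,\rho)$ plus the intermediate-value characterization of non-affordability) of the paper's contradiction-based $\varepsilon$-argument, but the underlying comparison is identical.
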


\begin{proof}
For this, it is sufficient to show that \rx{$\satNum$} always returns an outcome that is \textbf{C6}-priceable for a $B > b$.
We observe that requirements \textbf{C1} to \textbf{C5}, are naturally satisfied by the price system constructed throughout \rx{$\satNum$}. The requirement that $B > b$ is however not naturally satisfied. To change this, let $\delta = \min_{p \in P} c(p)-(\sum_{i \in N_p} B^*_i) $. Since, no further project is affordable, we know that $\delta > 0$. We now set $B_\delta = B + \frac{\delta}{n}$ to be the new budget. Requirements \textbf{C1} to \textbf{C4} still naturally hold for this budget. Further, we know that $c(p)-(\sum_{i \in N_p} B^*_i) \ge \delta$ and hence $c(p) \ge \sum_{i \in N_p} (B^*_i + \frac{\delta}{n})$. Hence, \textbf{C5} is also satisfied and we only need to show that \rx{$\satNum$} indeed satisfies \textbf{C6}. Hence, we need to show that for any $p_j \notin W$ and $p_k \in W$ it holds that $\sum_{i \in N_j} d_i(p_k) \le c(p_j)$. Assume on the contrary that $\sum_{i \in N_j} d_i(p_k) > c(p_j)$. At the time $p_k$ gets bought we know that $p_k$ is $\rho$ affordable, while $p_j$ is not $\rho'$ affordable for any $\rho' < \rho$ and thus $\sum_{i \in N_j} \min(b_i, \rho') < c(p_j)$. Further, we know that $\min(b_i, \rho) = d_i(p_k)$ for any $i \in N'$. 

Thus, we know that \begin{align*}
    c(p_j) < \sum_{i \in N_j} d_i(p_k) = \sum_{i \in N_j \cap N_k} \min(b_i, \rho)\\  \le  \sum_{i \in N_j} \min(b_i, \rho).
\end{align*} Let $N_{\min} = \{i \in N_j \colon b_i < \rho\}$. If $N_{\min} = N'$, we could set $\rho' = \max_{i \in N'} b_i < \rho$ and would thus get 
\begin{align*}
    c(p_j) < \sum_{i \in N_j} \min(b_i, \rho) = \sum_{i \in N_j} \min(b_i, \rho') < c(p_j)
\end{align*} and thus a contradiction. Otherwise, we could pick $\rho'$ such that $\lvert N_j \rvert (\rho - \rho') < (\sum_{i \in N_j} \min(b_i, \rho) - c(p_j))$. Then since $p_j$ is not $\rho$ affordable get that \begin{align*}
    c(p_j) &> \sum_{i \in N_j} \min(b_i, \rho') \\ &\ge \sum_{i \in N_j} \min(b_i, \rho) + \lvert N_j \rvert (\rho' - \rho)\\
    &> c(p_j).
\end{align*}
\end{proof}

Next, we present an example showing that \rx{$\satCost$} does not satisfy \textbf{C6}. Similar counterexamples can be constructed for other satisfaction function $\sat \neq \satNum$. As a consequence, \Cref{thm:exdnspjr} does not apply to those variants of MES.

 \begin{example}
 \label{exp:dnsmes}
 Consider an ABB instance $( A, P, c, b )$
 with two voters, three projects and budget $b = 3$,
 where project $p_1$ costs $3$ and is approved by both voters,
 project $p_2$ costs $1$ and is only approved by voter $1$, and
 project $p_3$ costs $1$ and is only approved by voter $2$.

 In this example, the outcome of \rx{$\satCost$} is $\{p_1\}$. Assume that there exists
 a budget $B$ and payment functions $d_1$ and $d_2$ such that \textbf{C1}-\textbf{C6} 
 are satisfied. Then, $d_1(p_1) + d_2(p_1) = 3$. Hence either $d_1(p_1)$ or $d_2(p_1)$
 must be larger than $1$. Assume w.l.o.g.\ $d_1(p_1) > 1$. Then we have 
 \(\sum_{i \in N_2} d_i(p_1) = d_1(p_1) > 1 = c(p_2).\) 
 This contradicts \textbf{C6}.
 \end{example}

    Two further rules for which we can always find such a price system are the PB versions of sequential Phragm\'{e}n \citep{phragmen:sur,brill:phragmen} and the maximin support method \citep{SFFB18a}.
    For the definitions of these two rules, we refer to the 
    Appendix~\ref{app:phrag}. 
    \begin{restatable}{corollary}{corphragmms}\label{cor:phrag}
    Sequential Phragm\'{e}n and the maximin support method provide \pjrxsat for all \dns functions $\sat$.
    \end{restatable}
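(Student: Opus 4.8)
The plan is to invoke Theorem~\ref{thm:exdnspjr}, which reduces the entire corollary to a single structural fact: both sequential Phragm\'{e}n and the maximin support method always produce outcomes that are \textbf{C6}-priceable with an associated budget $B > b$. Since Theorem~\ref{thm:exdnspjr} guarantees \pjrxsat for all \dns functions whenever an outcome admits such a price system, it suffices to construct, for each of these two rules, a price system $\{B,d\}$ satisfying \textbf{C1}--\textbf{C6} with $B>b$. Both rules are load-balancing rules in which voters continuously (or sequentially) accumulate ``load'' or ``time'' to pay for selected projects, so the natural price system records, for each voter $i$ and selected project $p_k$, exactly the amount that voter $i$ contributed toward $c(p_k)$.

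First I would read off the canonical price system from each rule's dynamics. For sequential Phragm\'{e}n, when a project $p_k$ is added at time $\tau_k$, the approvers of $p_k$ split its cost so as to equalize their accumulated loads; I would set $d_i(p_k)$ to be voter $i$'s payment in that step and let $B/n$ correspond to the total time elapsed (plus a small slack, as in the proof of the \rx{$\satNum$} corollary). Constraints \textbf{C1}--\textbf{C4} are immediate from the construction (only approvers pay, only selected projects are paid for, each voter's total payment is bounded, and each selected project is fully funded). Constraint \textbf{C5} with a strict budget $B>b$ follows from exhaustiveness together with the same $\delta$-slack trick used in the \rx{$\satNum$} proof: since no unselected project is affordable at termination, there is a strictly positive gap $\delta$, and inflating the per-voter budget by $\delta/n$ restores \textbf{C5} while keeping \textbf{C1}--\textbf{C4} intact and forcing $B>b$.

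The main obstacle, and the only genuinely rule-specific work, is verifying \textbf{C6}: for every unselected $p_j$ and every selected $p_k$, the approvers of $p_j$ collectively paid at most $c(p_j)$ toward $p_k$, i.e. $\sum_{i \in N_j} d_i(p_k) \le c(p_j)$. The key point is that in a load-balancing rule, a voter's payment to a project is governed by a uniform ``price per unit load'' at the moment of purchase, and an unselected project $p_j$ must have been \emph{unaffordable} at every such moment. Concretely, I would argue exactly as in the \rx{$\satNum$} corollary: if the approvers of $p_j$ had jointly contributed more than $c(p_j)$ toward $p_k$, then at the time $p_k$ was bought those same approvers could have funded $p_j$ at a weakly smaller per-voter load, contradicting the minimality (for Phragm\'{e}n) or maximin (for the maximin support method) selection criterion that chose $p_k$ instead of $p_j$. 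Because both rules select at each step the project minimizing the maximal resulting load, the inequality defining \textbf{C6} is precisely a consequence of this greedy-minimality, so the contradiction argument transfers with only notational changes.

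I expect the Phragm\'{e}n case to be the cleaner of the two, since its load equalization gives the payments directly, whereas for the maximin support method I would need to be slightly careful that the support-maximizing payment vector chosen at each step can be recorded consistently as the $d_i(p_k)$ values and that it yields the same minimality property underlying \textbf{C6}. In both cases, once the price system is fixed and \textbf{C6} is established via the affordability-contradiction, the corollary follows immediately by applying Theorem~\ref{thm:exdnspjr}. The detailed verification of \textbf{C6} for each rule is the substantive step; everything else is the bookkeeping already carried out for \rx{$\satNum$}.
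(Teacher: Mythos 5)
Your overall architecture is the same as the paper's: reduce everything to Theorem~\ref{thm:exdnspjr}, read the payment functions $d_i(p_k)$ off the load increments of the run, get \textbf{C1}--\textbf{C4} by construction, and establish \textbf{C6} by the argmin contradiction (if the approvers of an unselected $p_j$ jointly paid more than $c(p_j)$ toward a selected $p_k$, then $p_j$ had a strictly smaller purchase time $t(\vec{l},A,p_j,\sat)$ at the round $p_k$ was chosen, contradicting the selection rule). That part, including the remark that the maximin support method needs the analogous care with its rebalanced loads, matches the paper's Appendix~\ref{app:phrag} proof.

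The genuine gap is in how you obtain \textbf{C5} together with $B>b$. The $\delta$-slack trick from the \rx{$\satNum$} corollary does not transfer. First, generalized sequential Phragm\'{e}n as defined here breaks as soon as \emph{some} argmin project would overshoot $b$, so its outcomes need not be exhaustive; you cannot invoke exhaustiveness. Second, Phragm\'{e}n has no per-voter budget during its run, so ``no unselected project is affordable at termination'' is not even well-defined until $B$ is fixed; with your choice $B/n = {}$elapsed time $t_k$ of the last purchase, one only gets $\sum_{i \in N_p} \unspent{i} \le c(p)$, possibly with \emph{equality} (zero gap), and the inflation needed to force $B > b$\,---\,more than $(b - n t_k)/n$ per voter, which can be arbitrarily large\,---\,is not controlled by any affordability gap and would itself destroy \textbf{C5}. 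The idea your proposal is missing is the paper's choice of $B$: take the project $p'$ that \emph{triggered termination} (an argmin project with $c(W \cup \{p'\}) > b$), consider the hypothetical loads after buying it, and set $B = n \cdot \max_{i \in N} l_i$, i.e., $B/n = t(\vec{l}, A, p', \sat)$. Then \textbf{C5} holds because any $p''$ with $\sum_{i \in N_{p''}} \unspent{i} > c(p'')$ would have a strictly smaller purchase time than $p'$, contradicting the argmin; and $B > b$ follows with no exhaustiveness assumption, since the approvers of $p'$ would spend exactly their leftover budgets on it, so that $B \ge c(W) + c(p') = c(W \cup \{p'\}) > b$. The analogous device (the load vector witnessing $\argmin_{p,\,\text{loads } l} s(l)$ for the overshooting project) is what is needed for the maximin support method.
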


\Cref{cor:phrag} is proved in \Cref{app:phrag}.

Finally, we can show that \dns is, in a sense, a necessary restriction. Namely, we can show that for any function mapping costs to satisfaction in a way that violates DNS, we can find an instance such that \rx{$\satNum$} does not satisfy \mbox{PJR-x} for that instance. We give an informal statement of the theorem here and a full statement and proof in 
the full version of this paper. 

\begin{restatable}{proposition}{pjrdnsclass} \label{prop:dns-necessary}
Let $\mu$ be an additive satisfaction function that is not a \dns function. Then there exists an ABB instance $(A,P,c,b)$ with satisfaction function $\sat$ such that \rx{$\satNum$  } violates \pjrx{$\sat$}.
\end{restatable}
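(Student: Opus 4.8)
The plan is to prove the contrapositive-style claim constructively: given an additive satisfaction function $\sat$ that fails \dns, I will build a small ABB instance on which \rx{$\satNum$} produces an outcome violating \pjrx{$\sat$}. Since $\sat$ is additive but not \dns, one of the two defining inequalities must fail for some pair of projects. That is, there exist costs $c_1 \le c_2$ (the satisfaction depends only on cost for the relevant single-project sets) such that \emph{either} the monotonicity condition fails, i.e.\ $\sat$ assigns \emph{strictly less} satisfaction to the more expensive project ($\sat$ on cost $c_2$ is smaller than $\sat$ on cost $c_1$), \emph{or} the normalized-satisfaction condition fails, i.e.\ the cheaper project has a strictly smaller satisfaction-per-cost ratio than the more expensive one. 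I would handle these two cases separately, since they call for genuinely different instances.

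First I would set up the instance so that \rx{$\satNum$} is forced to select projects in the ``wrong'' order from the perspective of $\sat$. The key lever is that \rx{$\satNum$} treats every project as giving one unit of satisfaction, so its $\rho$-affordability ordering is driven purely by cost-per-approver, completely ignoring $\sat$. The idea is to give a cohesive group $N'$ whose approved projects include a cheap project $p_{\text{cheap}}$ (cost $c_1$) that \rx{$\satNum$} buys, while the set $T$ witnessing cohesiveness contains the expensive project $p_{\text{exp}}$ (cost $c_2$) that \rx{$\satNum$} leaves unbought. In the monotonicity-failure case, because $\sat(p_{\text{exp}}) < \sat(p_{\text{cheap}})$, the outcome can already contain enough $\sat$-satisfaction for the group relative to $T$, yet by choosing $T$ to also contain the high-satisfaction cheaper project I can engineer $\sat(W \cap \bigcup_{i \in N'} A_i \cup \{p\}) \le \sat(T)$ for the leftover project $p \in T \setminus W$. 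In the normalized-ratio-failure case, the expensive project is the ``good deal'' under $\sat$, so loading $T$ with the expensive project and letting \rx{$\satNum$} spend the group's budget on several cheap projects makes $\sat(T)$ exceed what the purchased cheap projects plus one more leftover project provide.

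The main work is arithmetic bookkeeping: I must choose the number of voters $n$, the budget $b$, the approval structure, and the multiplicities of cheap versus expensive projects so that three things hold simultaneously. (1) The witnessing group $N'$ is genuinely $T$-cohesive, i.e.\ $c(T) \le \frac{|N'|}{n} b$; this pins down how large $T$ may be. (2) \rx{$\satNum$} actually selects the intended cheap projects and not the expensive one — here I rely on the cost-per-approver affordability criterion, making the expensive project sufficiently expensive per approver (or giving it fewer approvers) that \rx{$\satNum$} always prefers the cheap alternatives, exactly as in \Cref{ex:pricenoimp} and \Cref{exp:dnsmes}. (3) The resulting $\sat$-values satisfy the violating inequality $\sat((W \cap \bigcup_{i \in N'} A_i) \cup \{p\}) \le \sat(T)$ for some $p \in T \setminus W$, which is where the failure of the respective \dns inequality is used to flip the comparison.

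I expect the main obstacle to be treating $\sat$ as a general additive function rather than a fixed closed form: I only know the values of $\sat$ on the two offending cost levels (and that it is additive and monotone nondecreasing on single projects via Definition~\ref{def:sat-fct}, together with the precise way it violates one \dns inequality). So the construction must use \emph{only} projects whose costs lie in $\{c_1, c_2\}$, so that $\sat(T)$ and $\sat(W')$ decompose additively into known per-project values. The delicate point is ensuring that the chosen integer multiplicities of cost-$c_1$ and cost-$c_2$ projects make the cohesiveness budget constraint and the \rx{$\satNum$} selection both work out while the $\sat$-inequality is strict in the right direction; since $c_1, c_2$ and the associated $\sat$-values are arbitrary reals, I would scale the budget and choose group sizes as free parameters to absorb these constants, possibly duplicating the cheap project enough times that the strict ratio gap accumulates into a decisive margin. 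Once these parameters are fixed, verifying the violation is a routine additive computation.
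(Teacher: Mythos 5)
Your two-case decomposition (monotonicity failure vs.\ normalized-ratio failure) is exactly the paper's, and your case for the ratio violation matches the paper's construction: a single voter, \rx{$\satNum$} drains the budget on cheap projects because they are cheaper per approver, the witness set $T$ consists of expensive projects with strictly better satisfaction-per-cost, and a multiplicity parameter $\beta$ amplifies the strict ratio gap until it beats the one-project slack (the paper picks $\beta$ with $\frac{x'}{\sat(x')} < \frac{\beta-1}{\beta}$, giving $\beta x' + \sat(x') < \beta\sat(x')$). Your instinct to use only projects at the two offending cost levels, so that additivity lets you evaluate $\sat$ exactly, is also the paper's.

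However, your plan for the monotonicity-failure case is oriented backwards, and as stated it cannot produce a violation. You propose that \rx{$\satNum$} buys the cheap project while $T$ contains the unbought expensive one, and that you would give the expensive project a \emph{higher} per-approver cost or \emph{fewer} approvers so that MES avoids it. But in this case the cheap project is simultaneously cheaper \emph{and} more satisfying ($\sat(x) > \sat(x')$ with $x \le x'$), so buying cheap projects is the $\sat$-optimal use of the group's budget: if the group's approvals are confined to the construction's projects and the group is $T$-cohesive, then either every project of $T$ stays $\rho$-affordable until bought (MES only terminates when nothing is affordable, so $T \subseteq W$), or the group's money is spent on \emph{extra} cheap projects, each contributing strictly more satisfaction than the expensive project it displaces, pushing $\sat(W \cap \bigcup_{i \in N'} A_i)$ above $\sat(T)$; your patch of putting the cheap projects into $T$ as well does not escape this dichotomy. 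The paper does the opposite: it makes MES buy the expensive, low-satisfaction projects by adding $p$ extra voters who approve \emph{only} those, so that $\frac{x'}{p+q} \le \frac{1}{q}$ and the expensive projects win the affordability comparison, while the $q$ all-approving voters are cohesive over $\beta$ cheap projects with $\sat(T)=\beta$; the budget $b=\beta(x'+\varepsilon)$ with $\beta\varepsilon<1$ ensures MES stops after exactly $\beta$ expensive projects, and $\beta$ is chosen so that $\sat(x') + \frac{1}{\beta} < 1$, yielding $\beta\sat(x') + 1 < \beta$. So the missing idea is precisely this approver-inflation trick (plus the $\varepsilon$-slack budget to force termination), without which your case-(i) instance cannot be made to work.
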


\begin{proof}

We formally want to show that for any function $s\colon \mathbb{R}^+ \to \mathbb{R}^+$, which is not DNS in the sense that there are either values $x, x' \in \mathbb{R}$ with $x \le x'$ such that 
\begin{enumerate}
    \item[(i)] $\sat(x) > \sat(x') $ or
    \item[(ii)] $\frac{\sat(x)}{x} < \frac{\sat(x')}{x'} $.
\end{enumerate}
there is some PB instance, with $\sat(p) = s(c(p))$ such that \rx{$\satNum$} does not satisfy \pjrxsat. For readability we will just write $\sat$ instead of $s$.

Since $\sat$ is not DNS, there are values $x, x' \in \mathbb{R}$ with $x \le x'$ such that 
\begin{enumerate}
    \item[(i)] $\sat(x) > \sat(x') $ or
    \item[(ii)] $\frac{\sat(x)}{x} < \frac{\sat(x')}{x'} $.
\end{enumerate}
\paragraph{Case (i).} If $x \le x'$ but $\sat(x) > \sat(x')$. Without loss of generality, we can scale the instance such that $x = 1$ and $\sat(x) = 1$.  Let $\frac{p}{q} \in \mathbb{Q} \cap [x' - 1, x' - 1 + \varepsilon]$ for a sufficiently small $\varepsilon$ (we will see  later what ``sufficiently small'' means). Next, since $\sat(x') < 1$ we can choose $\beta \in \mathbb{N}$ large enough such that \begin{align}
    \sat(x') + \frac{1}{\beta} < 1 \label{eq:bound}
\end{align}
We set the budget to $b = \beta (x' + \varepsilon)$.
There are $2\beta + 2$ projects in total; half of them cost $1$ and half of them cost~$x'$.
There are $q+p$ voters; $q$ voters approve all projects and $p$ voters approve only the projects of cost~$x'$.

The $q$ voters approving everything are cohesive over $\beta$ of the projects of cost $1$, because
\begin{align*}\frac{bq}{p + q} = \frac{(\beta (x' + \varepsilon)) q }{p + q} \ge \frac{(\beta (x' + \varepsilon)) q }{(x' - 1 + \varepsilon)q + q}\\ = \frac{\beta (x' + \varepsilon) }{(x' + \varepsilon)} = \beta  \end{align*}

Let us consider how \rx{$\satNum$} behaves on this instance. Since \begin{align*}
    \frac{qx'}{p + q} \le \frac{q x'}{q + q(x'-1)} = 1 ,
\end{align*}
we have $\frac{1}{q} \ge \frac{x'}{p+q}$. Therefore, \rx{$\satNum$} selects projects of cost $x'$ until the whole budget is used. We can choose $\varepsilon$ small enough, such that $\beta\varepsilon < 1$. Then, \rx{$\satNum$} would choose exactly $\beta$ of the projects of cost $x'$ after which only $\beta\varepsilon$ budget is left, which is not enough to afford any other project. Following this, all voters have a satisfaction of $\beta \sat(x')$. Thus, using \eqref{eq:bound}, we obtain that $\beta \sat(x') + 1 < \beta$ and therefore \pjrxsat is not satisfied.

\paragraph{Case (ii).} Next, we assume that $x \le x'$ but $\frac{\sat(x)}{x} < \frac{\sat(x')}{x'}$. In this case we only need a single voter for whom \rx{$\satNum$} only buys projects of cost $x$, while this single voter values projects of cost $x'$ more. We again scale the instance such that $x = 1$ and $\sat(x) = 1$. Thus, we know that $\frac{\sat(x')}{x'} > 1$. Due to this, there must exist a $\beta \in \mathbb{N}$ with $\frac{x'}{\sat(x')} < \frac{\beta - 1}{\beta}$. We set the budget $b = \beta x'$. 

It is easy to see that the single voter is cohesive over~$\beta$ projects of cost $x'$ with a utility of $\beta \sat(x')$. However, \rx{$\satNum$} would instead buy at most $b$ projects of cost $1$, resulting in a satisfaction of $b = \beta x'$. Since $\frac{x'}{\sat(x')} < \frac{\beta - 1}{\beta}$, we thus obtain $\beta x' + \sat(x') < \beta \sat(x')$, and hence \pjrxsat is not satisfied.
\end{proof}

\section{Conclusion}
\label{sec:conclusion}

We have studied proportionality axioms for participatory budgeting elections based on approval ballots. Our results can be summarized along two main threads:

\begin{enumerate}
\item
If strong (i.e., EJR-like) proportionality guarantees are desired, then it is necessary to know the satisfaction function, as different satisfaction functions may lead to incompatible requirements (\Cref{thrm:ejr-imposs}). If the satisfaction function is known and belongs to the class of DNS functions, however, 
we can guarantee \textit{EJR up to any project} using a polynomial-time computable variant of MES tailored to this function (\Cref{Rule-X-EJR-x-margUtil}).
   
 \item 
If the proportionality requirement is weakened to a PJR-like notion, there is no need to know the satisfaction function precisely: We identify a large class of satisfaction functions so that \textit{PJR up to any project} is achievable for all those functions simultaneously (Theorem~\ref{thm:exdnspjr}).
We identify a class of voting rules that achieve this, including Phragm\'{e}n's sequential rule, the maximin support method, and a variant of MES. (Among those three rules, the MES variant is the only rule that additionally satisfies EJR w.r.t. %
the cardinality-based satisfaction function.)

\end{enumerate}

It is open whether we can even achieve EJR-x (or even \mbox{PJR-x}) in polynomial time for additive non-DNS functions. Here, it seems crucial to further identify rules --- besides MES --- providing proportionality guarantees for PB. Furthermore, it would be interesting to push the boundaries of \Cref{thm:exdnspjr}; for example, can we soften the assumption that we use the same satisfaction function for all voters?

It is also an open question whether proportional outcomes can be computed in polynomial time for satisfaction functions that are not additive (e.g., for submodular or subadditive satisfaction functions).
Looking beyond the approval-based setting, it would be interesting to extend our framework to general (additive or non-additive) utility functions.

\section*{Acknowledgements}

We thank Dominik Peters for helpful comments. 
This work was supported by the Austrian Science Fund (FWF) under the research grants P31890 and J4581 and by the Deutsche Forschungsgemeinschaft (DFG) under the grant BR~4744/2\nobreakdash-1 and the Graduiertenkolleg ``Facets of Complexity'' (GRK~2434).

\bibliographystyle{apalike}
\bibliography{literature}

\begin{thebibliography}{}

\bibitem[Aziz et~al., 2017]{aziz:justified}
Aziz, H., Brill, M., Conitzer, V., Elkind, E., Freeman, R., and Walsh, T.
  (2017).
\newblock Justified representation in approval-based committee voting.
\newblock {\em Social Choice and Welfare}, 2(48):461--485.

\bibitem[Aziz and Lee, 2021]{AzLe21}
Aziz, H. and Lee, B.~E. (2021).
\newblock Proportionally representative participatory budgeting with ordinal
  preferences.
\newblock In {\em Proceedings of the 35th AAAI Conference on Artificial
  Intelligence (AAAI)}, pages 5110--5118.

\bibitem[Aziz et~al., 2018]{aziz:proportionality}
Aziz, H., Lee, B.~E., and Talmon, N. (2018).
\newblock Proportionally representative participatory budgeting: Axioms and
  algorithms.
\newblock In {\em Proceedings of the 17th International Conference on
  Autonomous Agents and Multiagent Systems (AAMAS)}, pages 23--31.

\bibitem[Baumeister et~al., 2022]{BBL22a}
Baumeister, D., Boes, L., and Laußmann, C. (2022).
\newblock Time-constrained participatory budgeting under uncertain project
  costs.
\newblock In {\em Proceedings of the 31st International Joint Conference on
  Artificial Intelligence (IJCAI)}, pages 74--80.

\bibitem[Brill et~al., 2017]{brill:phragmen}
Brill, M., Freeman, R., Janson, S., and Lackner, M. (2017).
\newblock Phragm{\'e}n's voting methods and justified representation.
\newblock In {\em Proceedings of the 31st AAAI Conference on Artificial
  Intelligence (AAAI)}, pages 406--413. AAAI Press.

\bibitem[Caragiannis et~al., 2019]{CKMPSW19}
Caragiannis, I., Kurokawa, D., Moulin, H., Procaccia, A.~D., Shah, N., and
  Wang, J. (2019).
\newblock The unreasonable fairness of maximum nash welfare.
\newblock {\em ACM Transactions on Economics and Computation (TEAC)},
  7(3):1--32.

\bibitem[Chamberlin and C{ourant}, 1983]{ccElection}
Chamberlin, B. and C{ourant}, P. (1983).
\newblock Representative deliberations and representative decisions:
  {P}roportional representation and the {B}orda rule.
\newblock {\em American Political Science Review}, 77(3):718--733.

\bibitem[Chen et~al., 2022]{CLM21a}
Chen, J., Lackner, M., and Maly, J. (2022).
\newblock Participatory budgeting with donations and diversity constraints.
\newblock In {\em Proceedings of the 36th {AAAI} Conference on Artificial
  Intelligence (AAAI)}, pages 9323--9330.

\bibitem[Fairstein et~al., 2022]{FVMG22a}
Fairstein, R., Vilenchik, D., Meir, R., and Gal, K. (2022).
\newblock Welfare vs. representation in participatory budgeting.
\newblock In {\em 21st International Conference on Autonomous Agents and
  Multiagent Systems (AAMAS)}, pages 409--417.

\bibitem[Hershkowitz et~al., 2021]{HKPP21a}
Hershkowitz, D.~E., Kahng, A., Peters, D., and Procaccia, A.~D. (2021).
\newblock District-fair participatory budgeting.
\newblock In {\em Proceedings of the 35th {AAAI} Conference on Artificial
  Intelligence (AAAI)}, pages 5464--5471. AAAI Press.

\bibitem[Jain et~al., 2021]{JSTZ21a}
Jain, P., Sornat, K., Talmon, N., and Zehavi, M. (2021).
\newblock Participatory budgeting with project groups.
\newblock In {\em Proceedings of the 30th International Joint Conference on
  Artificial Intelligence (IJCAI)}, pages 276--282.

\bibitem[Lackner et~al., 2021]{lackner:fairness}
Lackner, M., Maly, J., and Rey, S. (2021).
\newblock Fairness in long-term participatory budgeting.
\newblock In {\em Proceedings of the 30th International Joint Conference on
  Artificial Intelligence (IJCAI)}, pages 299--305.

\bibitem[Lackner and Skowron, 2022]{lackner:abc}
Lackner, M. and Skowron, P. (2022).
\newblock {\em Multi-Winner Voting with Approval Preferences}.
\newblock Springer.

\bibitem[Laruelle, 2021]{Laru21a}
Laruelle, A. (2021).
\newblock Voting to select projects in participatory budgeting.
\newblock {\em European Journal of Operational Research}, 288(2):598--604.

\bibitem[Los et~al., 2022]{los:proportional}
Los, M., Christoff, Z., and Grossi, D. (2022).
\newblock Proportional budget allocations: A systematization.
\newblock In {\em Proceedings of the 31st International Joint Conference on
  Artificial Intelligence (IJCAI)}, pages 398--404.

\bibitem[Peters et~al., 2021a]{peters:market}
Peters, D., Pierczy\'{n}ski, G., Shah, N., and Skowron, P. (2021a).
\newblock Market-based explanations of collective decisions.
\newblock In {\em Proceedings of the 35th {AAAI} Conference on Artificial
  Intelligence (AAAI)}, pages 5656--5663. AAAI Press.

\bibitem[Peters et~al., 2021b]{PPS21a}
Peters, D., Pierczy\'{n}ski, G., and Skowron, P. (2021b).
\newblock Proportional participatory budgeting with additive utilities.
\newblock In {\em Proceedings of the 35th Conference on Neural Information
  Processing Systems (NeurIPS)}, pages 12726--12737.

\bibitem[Peters and Skowron, 2020]{PeSk20a}
Peters, D. and Skowron, P. (2020).
\newblock Proportionality and the limits of welfarism.
\newblock In {\em Proceedings of the 21st {ACM} Conference on Economics and
  Computation (ACM-EC)}, pages 793--794.

\bibitem[Phragmén, 1894]{phragmen:sur}
Phragmén, L.~E. (1894).
\newblock Sur une méthode nouvelle pour réaliser, dans les élections, la
  représentation proportionnelle des partis.
\newblock {\em Öfversigt af Kongliga Vetenskaps-Akademiens Förhandlingar},
  51(3):133--137.

\bibitem[S{\'{a}}nchez-Fern{\'{a}}ndez et~al.,
  2017]{fernandez:proportionalJustified}
S{\'{a}}nchez-Fern{\'{a}}ndez, L., Elkind, E., Lackner, M., Fern{\'{a}}ndez,
  N., Arias{-}Fisteus, J., Basanta{-}Val, P., and Skowron, P. (2017).
\newblock Proportional justified representation.
\newblock In {\em Proceedings of the 31st AAAI Conference on Artificial
  Intelligence (AAAI)}, pages 670--676.

\bibitem[S{\'a}nchez-Fern{\'a}ndez et~al., 2021]{SFFB18a}
S{\'a}nchez-Fern{\'a}ndez, L., Fern{\'a}ndez, N., Fisteus, J.~A., and Brill, M.
  (2021).
\newblock The maximin support method: An extension of the {D'Hondt} method to
  approval-based multiwinner elections.
\newblock In {\em Proceedings of the 35th {AAAI} Conference on Artificial
  Intelligence (AAAI)}, pages 5690--5697. AAAI Press.

\bibitem[Sreedurga et~al., 2022]{SBN22a}
Sreedurga, G., Bhardwaj, M.~R., and Narahari, Y. (2022).
\newblock Maxmin participatory budgeting.
\newblock In {\em Proceedings of the 31st International Joint Conference on
  Artificial Intelligence (IJCAI)}, pages 489--495.

\bibitem[Talmon and Faliszewski, 2019]{TF19a}
Talmon, N. and Faliszewski, P. (2019).
\newblock A framework for approval-based budgeting methods.
\newblock In {\em Proceedings of the 33rd AAAI Conference on Artificial
  Intelligence (AAAI)}, pages 2181--2188. AAAI Press.

\end{thebibliography}

\clearpage
\appendix

\section{A strengthening of $\sat$-EJR-1}\label{App:EJR-1}

In this section, we consider a slightly stronger version of \ejrosat
that we call \ejrosatplus. The only difference to \ejrosat is that 
we require the project $p$ to come from $T$.

    \begin{definition}\label{var:representation:definition:EJR-1+}
An outcome $W$ satisfies \ejrosatplus with respect to an
satisfaction function $\sat$ and an ABB
instance $( A, P, c, b )$ if and only if for every
$T$-cohesive group $N'$ either $T \subseteq W$
or there exists a voter $i \in N'$ and a project $p \in T \setminus W$ such that 
	\[\sat_i(W \cup \{p\}) > \sat_i(T).\]

    \end{definition}

By definition, \ejrosatplus implies \ejrosat and is implied by \ejrxsat.
We show that \rx{$\sat$} satisfies \ejrosatplus for additive satisfaction functions.
The proof of this result will also serve as a blueprint for proving Theorem~\ref{Rule-X-EJR-x-margUtil}

\begin{restatable}{theorem}{ejrorx}\label{Rule-X-EJR-1}
Let $\sat$ be an additive satisfaction function. Then \rx{$\sat$}
satisfies \ejrosatplus.
\end{restatable}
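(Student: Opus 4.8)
The plan is to run \rx{$\sat$} on the given instance and analyze its outcome $W = \{p_1, \dots, p_k\}$, indexed in the order the projects were selected, writing $W_j = \{p_1, \dots, p_j\}$ for the prefixes, $\gamma_i(p) \in [0, \frac{b}{n}]$ for the amount voter $i$ pays for $p$, and $\rho_j$ for the (non-decreasing) rate at which $p_j$ is bought. For a project bought at rate $\rho$ I track its satisfaction-per-budget $q(p) = 1/\rho$, and for any project $p$ I write $q^*(p) = |N'|\,\frac{\sat(p)}{c(p)}$. Fix a $T$-cohesive group $N'$; if $T \subseteq W$ we are done, so assume $T \setminus W \neq \emptyset$ and pick $p^* \in T \setminus W$ maximizing $\frac{\sat(p^*)}{c(p^*)}$ (equivalently $q^*(p^*)$) among the unchosen projects of $T$. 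Since $T \subseteq A_{i}$ for every $i \in N'$, additivity reduces the goal to exhibiting a single voter $i^* \in N'$ with $\sat_{i^*}(W \cup \{p^*\}) > \sat_{i^*}(T)$.

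To locate $i^*$, I would first observe that at termination $p^*$ is unaffordable even for $N'$ alone, which forces $\sum_{i \in N'} B^*_i < c(p^*)$ and hence some voter of $N'$ ends below the ``fair share'' $\frac{c(p^*)}{|N'|}$. Let $k^*$ be the first round after which some $i^* \in N'$ drops below this share and set $W^* = W_{k^*}$. The point of this minimal choice is that throughout rounds $1, \dots, k^*$ the voters in $N'$ could still collectively afford $p^*$, yet \rx{$\sat$} selected $p_1, \dots, p_{k^*}$ instead. I then prove the stronger statement $\sat_{i^*}(W^* \cup \{p^*\}) > \sat_{i^*}(T)$, which upgrades to the $W$-version by monotonicity of $\sat$.

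The core is an algebraic reduction followed by two quantitative estimates. Expanding $\sat_{i^*}(W^* \cup \{p^*\})$ and $\sat(T)$ by additivity and cancelling the common terms $\sat(T \cap W^*)$ and $\sat(p^*)$, the goal collapses to $\sat(A_{i^*} \cap W^* \setminus T) > \sat(T \setminus (W^* \cup \{p^*\}))$; this equivalence is exactly the reusable step later invoked in \Cref{Rule-X-EJR-x-margUtil}. I bound the left side from below by $q_{\min}\sum_{p \in W^* \setminus T} \gamma_{i^*}(p)$, where $q_{\min} = 1/\rho_{k^*}$, using that paying $\gamma_{i^*}(p)$ at rate $\rho_p$ yields $\sat(p) \ge q_{\min}\,\gamma_{i^*}(p)$ even under capping, and the right side from above by $q^*_{\max}\,\frac{c(T \setminus (W^* \cup \{p^*\}))}{|N'|}$ with $q^*_{\max}$ the maximum of $q^*$ over the relevant projects of $T$. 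It then suffices to prove the product inequality
\[ q_{\min}\sum_{p \in W^* \setminus T} \gamma_{i^*}(p) > q^*_{\max}\,\frac{c(T \setminus (W^* \cup \{p^*\}))}{|N'|}, \]
which factors into a money inequality $\sum_{p \in W^* \setminus T} \gamma_{i^*}(p) > \frac{c(T \setminus (W^* \cup \{p^*\}))}{|N'|}$ and a rate inequality $q_{\min} \ge q^*_{\max}$. For the rate inequality I would use that every project bought up to round $k^*$ was preferred to $p^*$ while $N'$ could still pay for it, giving $q_{\min} \ge q^*(p^*)$, together with $q^*(p) \le q^*(p^*)$ for unchosen $p \in T \setminus W$ (by the choice of $p^*$) and $q^*(p) \le 1/\rho_p \le q_{\min}$ for any $p \in T$ bought after round $k^*$.

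I expect the main obstacle to be the budget accounting behind the money inequality and the careful treatment of capped payments throughout. The clean bookkeeping---voter $i^*$ has spent at least $\frac{b}{n} - \frac{c(p^*)}{|N'|} \ge \frac{c(T) - c(p^*)}{|N'|}$ by round $k^*$ (using cohesiveness $c(T) \le \frac{|N'|}{n} b$)---must be reconciled with the fact that, under capping, an individual voter may pay more than an equal share for a commonly approved project; controlling the spending of $i^*$ on $T \cap W^*$ against $\frac{c(T \cap W^*)}{|N'|}$ is precisely where the minimality of $k^*$ and the rate bound $\rho_{k^*} \le 1/q^*(p^*)$ have to be combined. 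I would emphasise finally that only additivity, not \dns, is used here: the \dns property becomes necessary only to let one \emph{cheapest} project play the role of $p^*$ simultaneously for all of $T \setminus W$, which is the strengthening carried out in \Cref{Rule-X-EJR-x-margUtil}.
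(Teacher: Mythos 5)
Your proposal mirrors the paper's proof almost step for step: the same round $k^*$ after which some $i^* \in N'$ first has less than $\frac{c(p^*)}{|N'|}$ budget remaining (with the same termination-based existence argument), the same additive cancellation reducing the goal to $\sat_{i^*}(W^* \setminus T) > \sat_{i^*}(T \setminus (W^* \cup \{p^*\}))$, the same lower bound $q_{\min} \sum_{p \in W^* \setminus T} \gamma_{i^*}(p)$ and upper bound $q^*_{\max} \cdot \frac{c(T \setminus (W^* \cup \{p^*\}))}{|N'|}$, and the same factorization into a money inequality and a rate inequality (the paper's two Claims). Your small refinements are fine: defining $q_{\min} = 1/\rho_{k^*}$ via monotonicity of the purchase rates, and handling capped payments in the left-hand bound, are both valid (indeed slightly more careful than the paper's ``contributed in full'' phrasing). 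The one substantive deviation is the choice of $p^*$: you take the project of $T \setminus W$ maximizing $\sat(p)/c(p)$, whereas the paper takes the \emph{most expensive} project of $T \setminus W$.

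This deviation opens a genuine gap in the rate inequality $q_{\min} \ge q^*_{\max}$. The paper's argument is anchored entirely in rounds $1, \dots, k^*$: there, every voter of $N'$ still has at least $\frac{c(p^*)}{|N'|}$ unspent, and because $p^*$ is the most expensive project of $T \setminus W$, this budget floor covers the equal share $\frac{c(p')}{|N'|}$ of each relevant $p'$, so each such $p'$ was $\frac{1}{q^*(p')}$-affordable throughout those rounds and every project actually selected satisfies $q(p) \ge q^*(p')$. With your (possibly cheap) $p^*$ this covering fails, and your patch for projects $p \in T$ bought after round $k^*$ --- the inequality $q^*(p) \le 1/\rho_p$ --- is false in general: $\rho_p$ reflects the actual, possibly depleted budgets of $p$'s approvers at that later round, and once some members of $N'$ are broke, the remaining uncapped contributors each pay more than $\frac{c(p)}{|N'|}$, i.e.\ $\rho_p\,\sat(p) > \frac{c(p)}{|N'|}$, which is exactly $q^*(p) > 1/\rho_p$. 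Monotonicity of $\rho$ gives $1/\rho_p \le 1/\rho_{k^*} = q_{\min}$, which points the wrong way and cannot salvage $q^*(p) \le q_{\min}$. The repair is to revert to the paper's max-cost choice, after which $q^*(p') \le q_{\min}$ follows for each $p'$ directly from affordability in rounds $1, \dots, k^*$, without any comparison to $q^*(p^*)$. Your closing remark also gets the division of labor between the two theorems slightly backwards: in Theorem~\ref{Rule-X-EJR-1} it is the \emph{max-cost} anchoring of $p^*$ that makes the budget floor usable under mere additivity, while Theorem~\ref{Rule-X-EJR-x-margUtil} switches to the \emph{cheapest} $p^*$ and then needs \dns precisely to push the ratio comparison $q^*(p) \le q^*(p^*)$ through --- a comparison that your max-ratio choice tries to import into a setting where \dns is unavailable.
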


\begin{proof}
Let $( A, P, c, b)$ be an ABB instance.
Let $W = \{p_1, \dots, p_k\}$ be the outcome output by \rx{$\sat$}
on this instance where $p_1$ was selected first, $p_2$ second etc.
For any $1 \leq j \leq k$, set $W_j \coloneqq \{p_1, \ldots, p_j\}$.
Consider $N' \subseteq N$, a $T$-cohesive group, for some $T \subseteq P$.
We show that $W$ satisfies \ejrosatplus for $N'$.
If $T \subseteq W$, then \ejrosatplus is satisfied by definition.
We will thus assume that $T \not \subseteq W$.

Let $p^* = \max\{c(p)\mid p \in T \setminus W\}$ the most expensive project
in $T$ that is not in $W$. Let $k^*$ be the first round after which there exists
a voter $i^* \in S$ whose load is larger than $\frac{b}{n} - \frac{1}{c(p^*)}$.
Such a round must exist as otherwise \rx{$\sat$} would not have terminated as 
the voters in $N'$ could still have afforded $p^*$.
Let $W^* = W_{k^*}$.
Our goal is to prove that $W^*$ satisfies \ejrosatplus for $S$
as there is a voter $i^*$ such that
\begin{align}
	\sat_{i^*}(W^*\cup \{p^*\}) > \sat_{i^*}(T)\nonumber
\end{align}
Due to the additivity of $\sat$ this is equivalent to
\begin{align}
	\Leftrightarrow \quad & \sat_{i^*}(W^*) > \sat_{i^*}(T \setminus \{p^*\}) \nonumber \\
	\Leftrightarrow \quad & \sat_{i^*}(W^* \cap T) + \sat_{i^*}(W^* \setminus T) > \nonumber \\
	& ~~\sat_{i^*}(T \cap W^*) + \sat_{i^*}(T \setminus (W^* \cup \{p^*\})) \nonumber \\
	\Leftrightarrow \quad & \sat_{i^*}(W^* \setminus T) > \sat_{i^*}(T \setminus (W^* \cup \{p^*\})). \label{eq:RuleXEJR1_mainClaim}
\end{align}
We now work on each side of inequality \eqref{eq:RuleXEJR1_mainClaim} to eventually prove that it is indeed satisfied.

\medskip

We start by the left-hand side of \eqref{eq:RuleXEJR1_mainClaim}.
Let us first introduce some notation that allows us to reason in terms of satisfaction
per unit of load. For a project $p \in W$, we denote by $\alpha(p)$ the smallest $\alpha \in \mathbb{Q}_{> 0}$ such that $p$ was $\alpha$-affordable when \rx{$\sat$} selected it.
Moreover, we define $q(p)$\,---\,the satisfaction that a voter that contributes fully to $p$
gets per unit of load\,---\,as $q(p) \coloneqq \frac{1}{\alpha(p)}$.

Since before round $k^*$, voter $i^*$ contributed in full for all projects in $W^*$
(as $\ell_{i^*} < \frac{b}{|S|}$ after each round $1, \dots, k^*$),
we know that $\alpha(p) \cdot \sat_{i^*}(\{p\})$
equals the contribution of $i^*$ for $p$ for any $p \in W^*$. We thus have:
\begin{align}
	& \sat_{i^*}(W^* \setminus T) \nonumber \\
	= \quad & \sum_{p \in W^* \setminus T} \sat_{i^*}(\{p\}) \nonumber \\
	= \quad & \sum_{p \in W^* \setminus T} \alpha(p) \cdot \sat_{i^*}(\{p\}) \cdot \frac{1}{\alpha(p)} \nonumber \\
	= \quad & \sum_{p \in W^* \setminus T} \gamma_{i^*}(p) \cdot q(p), \label{eq:RuleXEJR1_LeftHand1}
\end{align}
where $\gamma_{i^*}(p)$ denotes the contribution of $i^*$ to any $p \in W$,
defined such that if $p$ has been selected at round $j$, i.e., $p = p_j$, then 
$\gamma_{i^*}(p) = \gamma_{i^*}(W_j, \alpha(p_j), p_j)$.

Now, let us denote by $q_{\min}$ the smallest $q(p)$ for any $p \in W^* \setminus T$. From \eqref{eq:RuleXEJR1_LeftHand1}, we get
\begin{equation}
	\sat_{i^*}(W^* \setminus T) \geq q_{\min} \sum_{p \in W^* \setminus T} \gamma_{i^*}(p). \label{eq:RuleXEJR1_LeftHand2}
\end{equation}

\medskip

We now turn to the right-hand side of \eqref{eq:RuleXEJR1_mainClaim}. We introduce some additional notation for that. For every project $p \in T$, we denote by $q^*(p)$ the share per load that a voter
in $S$ receives if only voters in $N'$ contribute to $p$, and they all contribute in full to $p$, defined as
\[q^*(p) = \frac{\sat_v(p)}{\frac{c(p)}{|N'|}},\]
where $v$ is any voter in $N'$.

We have
\begin{align}
	& \sat_{i^*}(T \setminus (W^* \cup \{p^*\})) \nonumber \\
	= \quad & \sum_{p \in T \setminus (W^* \cup \{p^*\})} \sat_{i^*}(p) \nonumber \\
	= \quad & \sum_{p \in T \setminus (W^* \cup \{p^*\})}
	\frac{\sat_{i^*}(p)}{\frac{c(p)}{|N'|}} \cdot \frac{c(p)}{|S|} \nonumber \\
	= \quad & \sum_{p \in T \setminus (W^* \cup \{p^*\})} q^*(p) \cdot \frac{c(p)}{|N'|} \label{eq:RuleXEJR1_RightHand1}
\end{align}
Setting $q^*_{\max}$ to be the largest $q^*(p)$ for all $p \in T \setminus (W^* \cup \{p^*\})$, \eqref{eq:RuleXEJR1_RightHand1} gives us:

\begin{equation}\sat_{i^*}(T \setminus (W^* \cup \{p^*\})) \leq q^*_{\max} \cdot \frac{c(T \setminus (W^* \cup \{p^*\}))}{|N'|}. \label{eq:RuleXEJR1_RightHand2}
\end{equation}

\medskip

In the aim of proving inequality \eqref{eq:RuleXEJR1_mainClaim}, we want to show that 
\begin{equation}
q_{\min} \cdot \sum_{p \in W^* \setminus T} \gamma_{i^*}(p) > q^*_{\max} \cdot \frac{c(T \setminus (W^* \cup \{p^*\}))}{|N'|}.
\label{eq:RuleXEJR1_mainClaim2}
\end{equation}
Note that proving that this inequality holds, would in turn prove \eqref{eq:RuleXEJR1_mainClaim} thanks to
\eqref{eq:RuleXEJR1_LeftHand2} and \eqref{eq:RuleXEJR1_RightHand2}. We divide the proof of \eqref{eq:RuleXEJR1_mainClaim2} into two claims.

\begin{claim}
$q_{\min} \geq q^*_{\max}$.
\end{claim}

\begin{claimproof}
Consider any project $p' \in T \setminus (W^* \cup \{p^*\})$. It must be the case that $p'$ was at least $\frac{1}{q^*(p)}$-affordable in round $1, \dots, k^*$, for all $p \in W^*$,
as all voters in $N'$ could have fully contributed to it based on how we defined $k^*$.

Since no $p' \in T \setminus (W^* \cup \{p^*\})$ was selected by \rx{$\sat$}, we know that all projects
that have been selected must have been at least as affordable, i.e., for all $p \in W^*$ and $p' \in T \setminus (W^* \cup \{p^*\})$ we have:
	 \begin{align*}
	 & \alpha(p) \leq \frac{1}{q^*(p')} \\
	 \Leftrightarrow \quad & q(p) \geq q^*(p') \\
	 \Leftrightarrow \quad & q_{\min} \geq q^*_{\max}.
	 \end{align*}
This concludes the proof of our first claim.
\end{claimproof}

\begin{claim}
$\displaystyle \sum_{p \in W^* \setminus T} \gamma_{i^*}(p) > \frac{|T \setminus (W^* \cup \{p^*\})|}{|N'|}$.
\end{claim}

		 \begin{claimproof}
From the choice of $k^*$, we know that the load of voter $i^*$ at round $k^*$ is such that
	 \[\ell_{i^*}(W^*) + \frac{c(p^*)}{|N'|} > \frac{b}{n}.\]
On the other hand, since $N'$ is a $T$-cohesive group, we know that
	 \[\frac{c(T)}{|N'|} = \frac{c(T \setminus \{p^*\})}{|N'|}+ \frac{c(p^*)}{|N'|} \leq \frac{b}{n}.\]
Linking these two facts together, we get
	 \[\ell_{i^*}(W^*) > \frac{c(T \setminus \{p^*\})}{|N'|}.\]
By the definition of the load, we thus have:
	 \[\ell_{i^*}(W^*) = \sum_{p_j \in W^*} \gamma_{i^*}(p_j) > \frac{c(T \setminus \{p^*\})}{|N'|}.\]
This is equivalent to
	\begin{multline}
	\sum_{p_j \in T \cap W^*} \gamma_{i^*}(p_j) + \sum_{p_j \in T \setminus W^*} \gamma_{i^*}(p_j) > \\ \frac{c(T \cap W^*)}{|N'|} + \frac{c(T \setminus (W^*\cup \{p^*\}))}{|N'|} \label{eq:RuleXEJR1_SecondClaim}
	\end{multline}
Now, we observe that every voter in $N'$ contributed in full for every 
project in $W^*$. It follows that the contribution of every voter in $N'$ for a
project $p_j \in T \cap W^*$ is smaller or equal to the contribution needed if 
the voters in $N'$ would fund the project by themselves. In other words, for all
$p \in T \cap W^*$ we have
	\[\gamma_{i^*}(p) \leq \frac{c(p)}{|N'|}.\]
It follows that
	\[\sum_{p_j \in T \cap W^*} \gamma_{i^*}(p_j) \leq \frac{c(T \cap W^*)}{|N'|}.\]
For \eqref{eq:RuleXEJR1_SecondClaim} to be satisfied, we must have that
	\[\sum_{p_j \in W^* \setminus T} \gamma_{i^*}(p_j) > \frac{c(T \setminus (W^*\cup \{p^*\}))}{|N'|}\]
This concludes the proof of our second claim.
\end{claimproof}

\smallskip

Putting together these two claims immediately shows that inequality \eqref{eq:RuleXEJR1_mainClaim2} is satisfied, which in turn shows that \eqref{eq:RuleXEJR1_mainClaim} holds. Since $N'$, $T$ and $p^*$ were chosen arbitrarily, this shows that \rx{$\sat$} satisfied \ejrosatplus.
\end{proof}

\bigskip

\section{Sequential Phragm\'{e}n and the Maximin Support Method}
\label{app:phrag}

Finally, we introduce two more priceable rules satisfying \pjrxsat and \textbf{C6}. The first one is a generalization of Phragmén's sequential rule \citep{brill:phragmen}, the second one a generalization of the maximin support method \citep{SFFB18a}.

  \label{var:representation:definition:seqphrag}
For an ABB instance $( A, P, c, b )$ and a vector $l = (l_i)_{i \in N}$, define
    \[t(l, A, p, c) := \frac{c(p)+\sum_{i \in N_{p}} l_i}{|N_{p}|}\]
Generalized Sequential Phragmén, introduced to PB by \citet{los:proportional}, selects all outcomes $W$ that can result from the following iterative procedure:
       \smallskip
\begin{lstlisting}[escapeinside={(*}{*)}, language=Python]    
(*$W$*) := (*$\emptyset$*)
for all (*$i \in N$*):
    (*$l_i$*) := 0
(*$P$*) := (*$\{p \in P \colon \sum_{i \in N} l_i + c(p) \le b \} $*)
while (*$W \setminus P \neq \emptyset$*): 
    if exists (*$p' \in \argmin_{p \in P\backslash W} t(\vec{l}, A, p, \sat)$ s.t.\ $c(W \cup p') > b$*) :
       break
    else 
        for some (*$p' \in \argmin_{p \in P\backslash W} t(\vec{l}, A, p, \sat)$*):
           for all (*$i \in V_{p'}$*):
               (*$l_i$*) := (*$t(l, A, p', \sat)$*)
           (*$W$*) := (*$W \cup \{p\}$*)
return (*$W$*)
\end{lstlisting}       
  
\smallskip
    
 The (generalized) maximin support method can be seen as a variant of (generalized) sequential Phragm\'{e}n in which the loads are rebalanced in each iteration.\footnote{When \citet{aziz:proportionality} allegedly generalized Phragmén's sequential rule, they in fact generalized the maximin support method because they employed rebalancing.}
To define the maximin support method, for a given outcome $W$ we define a collection $l=(l_i)_{i \in N}$ of functions to be a \textit{set of loads for $W$} if $l_i \colon W \to \mathbb{R}^+$ for each $i \in N$ and the following conditions hold: 
\begin{itemize}
    \item $\sum_{i \in N} l_i(p) = c(p)$ for every $p \in W$
    \item $l_i(p) = 0$ if $p \notin A_i$ for every $i \in N$
\end{itemize}
For a given set of loads $l$, we let $s(l)$ denote the maxmial total load of a voter, i.e., $s(l) = \max_{i \in N} \sum_{p \in W} l_i(p)$.

The maximin support method now iteratively works as follows (in line 4, $l$ denotes a set of loads for $W \cup\{p\}$):

\smallskip

\begin{lstlisting}[escapeinside={(*}{*)}, language=Python]    
(*$W$*) := (*$\emptyset$*)
(*$P$*) := (*$\{p \in P \colon \sum_{p' \in W} c(p') + c(p) \le b \}$*)
while (*$P \neq \emptyset$*): 
    if exists (*$p' \in \argmin_{p \in P, \text{ loads }l} s(l)$ s.t.\ $c(W \cup p') > b$*) :
           break
    for some (*$p' \in \argmin_{p \in P, \text{ loads }l} s(l)$*):
        (*$W$*) := (*$W \cup \{p\}$*)
return (*$W$*)
\end{lstlisting}

\smallskip

\corphragmms*
\begin{proof}
Consider Sequential Phragmén first. \citet{los:proportional} have shown that Sequential Phragmén
is priceable. We modify their proof to show that there always exists a price system with $B > b$
that satisfies \textbf{C6}.
Consider an outcome $W$ of Sequential Phragmén. We can assume $W \neq P$ as otherwise \pjrsat
is trivially satisfied. Then, Sequential Phragmén terminated because there was a project $p'$
such that $p' \in \argmin_{p \in P\backslash W} t(\vec{l}, A, p, \sat)$ and $c(W \cup p') > b$.
Consider the load distribution defined by $l_i := t(l, A, p', \sat)$.
We claim that the following is a valid price system for $W$: Let $B = n \cdot \max_{i\in N}(l_i)$.
Furthermore, let $p_1, \dots, p_k$ be an enumeration of $W$ in the order in which the projects
are selected by Sequential Phragmén and let $l_i^j$ be the load of voter $i$ before project $p_j$
is selected. Finally, let $l_i^0 = 0$ for all $i \in N$. Then we define for all $i \in N$
\[d_i(p) = \begin{cases}
l_i^j - l_i^{j-1} & \text{if } p = p_j \text{ for some } j \leq k \\
0 & \text{else.}
\end{cases}
\]
We observe that \textbf{C1}-\textbf{C4} are satisfied by construction, similar to the proof of
\citet{los:proportional}. For \textbf{C5} we first observe that for $p'$ we have by construction.
\begin{multline*}
\sum_{i \in N_{p'}} B_i^* = \sum_{i  \in N_{p'}} t(\vec{l}, A, p', \sat) -
\sum_{p \in A_ i \cap W} d_i(p) =\\ \sum_{i  \in N_{p'}} t(\vec{l}, A, p', \sat) - l_i^k  = c(p')
\end{multline*}
Further, assume that there is a $p'' \in P \setminus W$ such that
$\sum_{i \in N_{p''}} B_i^* > c(p'')$. Then, we have the following:
\begin{align*}
\sum_{i \in N_{p''}} B_i^* &> c(p'')\\
\sum_{i  \in N_{p'}} t(\vec{l}, A, p', \sat) - l_i^k &> c(p'')\\
|N_{p''}| t(\vec{l}, A, p', \sat) - \sum_{i  \in N_{p''}}l_i^k &> c(p'') \\
 t(\vec{l}, A, p', \sat)  &> \frac{c(p'') + \sum_{i  \in N_{p''}}l_i^k}{|N_{p''}|} \\
 t(\vec{l}, A, p', \sat)  &> t(\vec{l}, A, p'', \sat)
\end{align*}
However, this contradicts $p' \in \argmin_{p \in P\backslash W} t(\vec{l}, A, p, \sat)$.
It follows that \textbf{C5} is satisfied.

Now consider \textbf{C6}. For the sake of a contradiction, assume there are $p_j, p_k$ 
such that $p_j \not \in W$, $p_k \in W$ and $\sum_{i \in N_j} d_i(p_k) > c(p_j)$. 
Furthermore, let $l_i^k$ be the load of voter $i$ before $p_k$ was selected. 
and $l_i^{k+1}$ be the load of voter $i$ after $p_k$ was selected. 
Then, we have the following:
\begin{multline*}
t(\vec{l^k}, A, p_j, \sat) = \frac{c(p_j) + \sum_{i  \in N_{p_j}}l_i^k}{|N_{j}|}\\
< \frac{ \sum_{i \in N_j} d_i(p_k)+ \sum_{i  \in N_{p_j}}l_i^k}{|N_{j}|} \\= 
\frac{ \sum_{i \in N_j} (l_i^{k+1} - l_i^{k} )+ \sum_{i  \in N_{j}}l_i^k}{|N_{j}|} \\=
\frac{ \sum_{i \in N_j} l_i^{k+1}}{|N_{j}|} =%
t(\vec{l^k}, A, p_k, \sat)
\end{multline*}
This is a contradiction to $p_k \in \argmin_{p \in P\backslash W} t(\vec{l}, A, p, \sat)$
in the round where $p_k$ is selected.

It remains to show that $B > b$. We know that 
\[\sum_{i\in N_{p'}} \left(\frac{B}{|N|} -  \sum_{p \in P} d_i(p)\right) = c(p')\]
Moreover, we know $b < c(W \cup p')$.
Using this, we get the following:
\begin{multline*}
b < c(W \cup p') = c(p') + \sum_{i \in N} \sum_{p \in P} d_i(p) \\ =
c(p') + \sum_{i \in N_{p'}} \sum_{p \in P} d_i(p) +
\sum_{i \in N \setminus N_{p'}} \sum_{p \in P} d_i(p) \\=
\sum_{i \in N_{p'}} \frac{B}{|N|}  + \sum_{i \in N \setminus N_{p'}} \sum_{p \in P} d_i(p)\\ \leq
\sum_{i \in N_{p'}} \frac{B}{|N|} + \sum_{i \in N \setminus N_{p'}} \frac{B}{|N|} = B
\end{multline*}
This concludes the proof for Sequential Phragmén.

Similarly, for the maximin support, there had to be a $p' \in \argmin_{p \in P, \text{ loads }l} s(l)$ with $c(W \cap \{p'\}) > b$. We can take the load $l$ for this argmin and again set $B = n \cdot \max_{i \in N}(l_i)$. We can again construct the prices based on the loads, which satisfies \textbf{C1}-\textbf{C4} by construction. Further, same as for the first case, a candidate $p''$ with $\sum_{i \in N_{p''}} B_i^* > c(p'')$ would be a better choice for the load $l$ and thus also for all loads. Hence, $p'$ would not have been in the argmin. Similarly, for \textbf{C6} the candidate $p_j$ would immediately lead to a better load distribution. Further, $b < B$ also holds by the same proof.
\end{proof}

\section{Local-BPJR}
\label{app:bpjr}

\citet{aziz:proportionality} defined
the following relaxation of \pjrsat (for $\sat=\satCost$).

    \begin{definition}\label{var:representation:definition:Local-BPJR-L}
        An outcome $W$ satisfies $\sat$-Local-BPJR with respect to a satisfaction function $\sat$ and an ABB instance $( A, P, c, b )$ if and only if there is no $T$-cohesive group $N'$ such that for any $W^*\subseteq P$ with  $\bigcup_{i \in N'}A_i \cap W \subsetneq W^*$ it holds that
        \[ W^* \in \argmax\left\{\sat(W') \mid W' \subseteq \bigcap_{i \in N'}A_i \land c(W') \leq c(T) \right\}\]

    \end{definition}
    
    \citet{aziz:proportionality} show that this property is satisfied by the (generalized) maximin support method (see \Cref{app:phrag}). In the unit-cost setting, $\satCost$-Local-BPJR does not imply PJR, as the following example illustrates.

    \begin{example}\label{var:representation:example:3}
        Consider the unit-cost ABB instance with three voters, projects $P=\{p_1,p_2,p_3,p_4\}$, and the following approval sets: 
        $A_1 = A_2 = \{p_1, p_2, p_3\}$ and $A_3 = \{p_1, p_2\}$. Let  $b=2$. 
        Then $W = \{p_3, p_4\}$ satisfies $\satCost$-Local-BPJR, as for any $T$-cohesive group $N'$, such that 
        \[T \in \argmax\{\sat(W') \mid W' \subseteq \bigcap_{i \in N'}A_i \land c(W') \leq \frac{\lvert N'\rvert \cdot 2}{n}\},\] it either holds that
        \begin{itemize}
            \item $\bigcup_{i \in N'} A_i \cap W \not\subseteq \bigcap_{i \in N'} A_i$ or 
            \item $|\bigcup_{i \in N'} A_i \cap W| \geq |T|$, thus maximal
        \end{itemize} 
        
        \noindent However, $W$ does not satisfy PJR because 
        \begin{itemize}
            \item $N = \{1,2,3\}$ is $\{p_1,p_2\}$-cohesive and
            \item $|\bigcup_{i \in N} A_i \cap W| = 1 < 2 = \satCost(\{p_1,p_2\})$.
        \end{itemize} 
    \end{example}

In the unit-cost setting, \pjrosat and \pjrxsat are equivalent to \pjrsat for cost-neutral and strictly increasing $\sat$. Therefore, \pjrosat and \pjrxsat are not implied by $\sat$-Local-BPJR either.

The following example shows that, for $\sat=\satCost$, \pjrosat does not imply $\sat$-Local-BPJR; as a result, the two concepts are incomparable.

\begin{example}
Consider the following ABB instance $( A, P, c, b )$
with 1 voter, three projects and budget $b = 4$:
\begin{center}
\begin{tabular}{ c|c c c}
                & $p_1$ & $p_2$ & $p_3$  \\
      \hline
      $c(\cdot)$     & 2 & 2 & 3 \\ 
      $\mu(\cdot)$     & 1 & 1 & 1 \\  
\end{tabular}
\end{center}

We observe that the single voter is $\{p_1,p_2\}$-cohesive. We claim that $\{p_1\}$
satisfies \pjro{$\satCost$}. Indeed, $p_3 \in A_v \setminus \{p_1\}$ 
and $\satCost(\{p_1,p_3\}) = 5 > 4 = \satCost(\{p_1,p_2\})$.
On the other hand, $\{p_1\}$ does not satisfy $\satCost$-Local-BPJR
as $A_v \cap \{p_1\} \subsetneq \{p_1,p_2\}$ and 
\begin{align*}
    &\{p_1,p_2\} \in \\&\argmax\left\{\satCost(W') \mid W' \subseteq \bigcap_{v \in S}A_v \land c(W') \leq c(\{p_1,p_2\}) \right\}
\end{align*}
\end{example}

\subsection{Relation between $\sat$-PJR-x and $\sat$-Local-BPJR}

We first show that \pjrxsat is a strengthening of $\sat$-Local-BPJR:
    
    \begin{proposition} \label{var:representation:proposition:PJR-w-implies-Local-BPJR-w}
        Given an ABB instance $( A, P, c, b)$ and a satisfaction function $\sat$, if $W$ satisfies \pjrxsat, then it also satisfies $\sat$-Local-BPJR.
    \end{proposition}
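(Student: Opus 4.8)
The plan is to argue by contraposition: I will show that if $W$ violates $\sat$-Local-BPJR, then it must also violate \pjrxsat. So suppose $W$ is not $\sat$-Local-BPJR. Unpacking \Cref{var:representation:definition:Local-BPJR-L}, this yields a $T$-cohesive group $N'$ together with a locally optimal bundle $W^* \in \argmax\{\sat(W') \mid W' \subseteq \bigcap_{i\in N'} A_i \text{ and } c(W') \le c(T)\}$ that strictly extends the group's current representation; writing $R := W \cap \bigcup_{i\in N'} A_i$, this means $R \subsetneq W^*$.

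The first key observation is that the optimal bundle $W^*$ is itself a cohesive target for $N'$. Indeed, by definition of the $\argmax$ we have $W^* \subseteq \bigcap_{i\in N'} A_i$ and $c(W^*) \le c(T)$, and since $N'$ is $T$-cohesive we have $c(T) \le \frac{|N'|}{n} b$. Hence $c(W^*) \le \frac{|N'|}{n} b$, so $N'$ is in fact $W^*$-cohesive. This lets me apply the \pjrxsat requirement to the pair $(N', W^*)$ rather than $(N', T)$.

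Next I would locate the project witnessing the violation. Since $R \subsetneq W^*$, I pick any $p \in W^* \setminus R$. Because $p \in W^* \subseteq A_i$ for every $i \in N'$, the project $p$ is approved by all of $N'$; consequently $p \in W$ would force $p \in W \cap \bigcup_{i\in N'} A_i = R$, contradicting $p \notin R$. Therefore $p \notin W$, i.e.\ $p \in W^* \setminus W$, so $p$ is a legitimate candidate for the \pjrxsat condition with respect to the $W^*$-cohesive group $N'$.

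Finally I combine these facts with monotonicity of $\sat$. From $R \subseteq W^*$ and $p \in W^*$ I get $R \cup \{p\} \subseteq W^*$, whence $\sat\bigl((W \cap \bigcup_{i\in N'} A_i) \cup \{p\}\bigr) = \sat(R \cup \{p\}) \le \sat(W^*)$ by the monotonicity guaranteed in \Cref{def:sat-fct}. This directly contradicts the \pjrxsat requirement $\sat((W \cap \bigcup_{i\in N'} A_i) \cup \{p\}) > \sat(W^*)$ for the $W^*$-cohesive group $N'$, completing the contrapositive. The main subtlety --- and essentially the only delicate point --- is reading the rather dense definition of $\sat$-Local-BPJR correctly so as to extract a locally optimal bundle $W^*$ that strictly contains the representation $R$; once that is in hand, the argument needs nothing beyond cohesiveness bookkeeping and the monotonicity of $\sat$, and in particular requires neither additivity nor the \dns property, which is why the proposition holds for arbitrary satisfaction functions.
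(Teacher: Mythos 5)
Your proof is correct and takes essentially the same route as the paper's own argument: both extract from the Local-BPJR violation a bundle $W^*$ with $W \cap \bigcup_{i\in N'} A_i \subsetneq W^* \subseteq \bigcap_{i\in N'} A_i$ and $c(W^*) \le c(T)$, observe that $N'$ is then $W^*$-cohesive, pick a project $p \in W^* \setminus W$, and use monotonicity of $\sat$ to get $\sat\bigl((W \cap \bigcup_{i\in N'} A_i) \cup \{p\}\bigr) \le \sat(W^*)$, contradicting \pjrxsat. The only differences are cosmetic: you phrase it as a contraposition rather than a contradiction, and you spell out explicitly the step $p \notin W$ (via $p \in W^* \subseteq \bigcap_{i\in N'} A_i$) that the paper leaves implicit.
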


\begin{proof}
Assume that $W$ satisfies \pjrxsat. For the sake of a contradiction, assume that 
$\sat$-Local-BPJR is violated and let $N'$ be a $T$-cohesive group that witnesses this 
violation. That means that there is an outcome $W^*$ such that 
\begin{enumerate}
\item[\textit{(1)}] $\bigcup_{i \in N'} A_i \cap W \subsetneq W^*$,
\item[\textit{(2)}] $W^* \subseteq \bigcap_{i\in N'} A_i$, and
\item[\textit{(3)}] $c(W^*) \leq c(T)$.
\end{enumerate}
Condition (3) implies 
\[c(W^*) \leq \frac{|N'|b}{n},\] which, together with (2)
means that $N'$ is $W^*$-cohesive. Finally, from (1) it follows that there 
is a $p \in W^* \setminus W$, which by (2) must also be in $\bigcap{i\in N'} A_i$.
This means that $N'$ is a $W^*$-cohesive group such that there is a
$p \in \bigcap{i\in N'} A_i \setminus W$ for which
\[(W \cap \bigcup_{i\in N'} A_i) \cup \{p\} \subseteq W^*\]
and hence by the definition of an approval-based satisfaction function 
\[\mu((W \cap \bigcup_{i\in N'} A_i) \cup \{p\}) \leq \mu(W^*).\]
This contradicts the assumption that $W$ satisfies \pjrxsat.
\end{proof}

Moreover, \pjrx{$\satCost$} (i.e., PJR-x for the cost-based satisfaction function $\satCost$)  on its own implies $\sat$-Local-BPJR for \textit{all} 
satisfaction functions $\sat$.

    \begin{proposition} \label{var:representation:proposition:Local-PJR-w-implies-Local-BPJR-w}
        Consider an ABB instance $( A, P, c, b)$ and $W\subseteq P$. 
        If $W$ satisfies \pjrx{$\satCost$}, then $W$ satisfies $\sat$-Local-BPJR for all satisfaction functions $\sat$.
    \end{proposition}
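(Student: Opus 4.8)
The plan is to prove the contrapositive, closely mirroring the structure of the proof of \Cref{var:representation:proposition:PJR-w-implies-Local-BPJR-w}, but invoking \pjrx{$\satCost$} in place of the general \pjrxsat. Fix an arbitrary satisfaction function $\sat$ and suppose that $W$ violates $\sat$-Local-BPJR. As extracted in that earlier proof, such a violation yields a $T$-cohesive group $N'$ together with an outcome $W^*$ satisfying the three conditions (1) $\bigcup_{i \in N'} A_i \cap W \subsetneq W^*$, (2) $W^* \subseteq \bigcap_{i \in N'} A_i$, and (3) $c(W^*) \le c(T)$. My goal is to turn this witness into a violation of \pjrx{$\satCost$}, contradicting the hypothesis.

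First I would show that $N'$ is in fact $W^*$-cohesive: condition (2) gives $W^* \subseteq \bigcap_{i \in N'} A_i$, and combining (3) with the $T$-cohesiveness of $N'$ yields $c(W^*) \le c(T) \le \frac{|N'|}{n} b$. This is the pivotal step, since it lets me use $W^*$ itself as the target set in the PJR-x condition rather than the original $T$. Next, condition (1) supplies a project $p \in W^* \setminus (\bigcup_{i \in N'} A_i \cap W)$; because $W^* \subseteq \bigcap_{i \in N'} A_i \subseteq \bigcup_{i \in N'} A_i$, we have $p \in \bigcup_{i \in N'} A_i$, whence $p \notin W$, and therefore $p \in W^* \setminus W$.

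Finally I would apply \pjrx{$\satCost$} to the $W^*$-cohesive group $N'$ and the project $p \in W^* \setminus W$, obtaining $\satCost((W \cap \bigcup_{i \in N'} A_i) \cup \{p\}) > \satCost(W^*)$, that is, $c((W \cap \bigcup_{i \in N'} A_i) \cup \{p\}) > c(W^*)$. On the other hand, by (1) together with $p \in W^*$ we have $(W \cap \bigcup_{i \in N'} A_i) \cup \{p\} \subseteq W^*$, so monotonicity of the cost function forces $c((W \cap \bigcup_{i \in N'} A_i) \cup \{p\}) \le c(W^*)$, a contradiction. Since $\sat$ was arbitrary, $W$ satisfies $\sat$-Local-BPJR for every satisfaction function.

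The argument is short, and the only genuine subtlety — the very point that makes this a strengthening of \Cref{var:representation:proposition:PJR-w-implies-Local-BPJR-w} — is recognizing that the entire contradiction is driven purely by \emph{costs}: the witness $W^*$ produced by the $\sat$-Local-BPJR violation (for any $\sat$) is plugged into the PJR-x inequality for $\satCost$, where the strict cost lower bound directly clashes with the fact that the set in question is a subset of $W^*$. I do not expect a serious obstacle beyond two points of care: that the target of the PJR-x application must be $W^*$ rather than $T$, and that the selected $p$ genuinely lies in $W^* \setminus W$.
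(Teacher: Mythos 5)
Your proof is correct and takes essentially the same approach as the paper's: both reduce a $\sat$-Local-BPJR violation to its purely cost-based witness conditions (a set $W^*$ with $\bigcup_{i \in N'} A_i \cap W \subsetneq W^* \subseteq \bigcap_{i \in N'} A_i$ and $c(W^*) \le c(T)$) and then invoke \pjrx{$\satCost$} with that witness as the cohesive target to force a cost contradiction. If anything, your contrapositive write-up is slightly more explicit than the paper's terse direct argument, since you spell out the re-targeting step (that $N'$ is $W^*$-cohesive, so PJR-x may be applied with $T \coloneqq W^*$ and $p \in W^* \setminus W$), which the paper leaves implicit when it asserts that no suitable $p \in \bigcap_{i \in N'} A_i \setminus W$ exists.
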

     
    \begin{proof}
        Assume that $W$ satisfies \pjrx{$\satCost$} and $N'$ is a $T$-cohesive group, such that 
        \[\bigcup_{i\in N'}A_i \cap W \subseteq \bigcap_{i\in N'}A_i\]
        Then by \pjrx{$\satCost$}, there is no $p \in \bigcap_{i\in N'}A_i \backslash W$, such that 
        \[\satCost(\bigcup_{i\in N'}A_i \cap W \cup \{p\}) \leq \satCost(\bigcup_{i\in N'}A_i \cap T).\]
        Therefore, there is no $W'$, such that 
        \[\bigcup_{i\in N'}A_i \cap W \subset W' \subseteq \bigcap_{i\in N'}A_i\]
        for which $c(W') \leq c(T)$ holds. Thus, $W$ satisfies $\sat$-Local-BPJR.
    \end{proof}

\end{document}